\DeclareMathAlphabet{\mathcal}{OMS}{cmsy}{m}{n}	%Use the standard calligraphic font
\newcommand{\State}{\bullet}
\newcommand{\langdc}{\elcd}
\newcommand{\globallangc}{\rcd}
\newcommand{\rg}[1]{R_{#1}} % global resolve operator
\newcounter{questioncounter}
\renewcommand{\phi}{\varphi}
\newcommand{\set}[2]{\ensuremath{\{#1\,|\,#2\}}}
\newcommand{\ab}[1]{\langle#1\rangle}
\newcommand{\ag}{\textsc{ag}}
\newcommand{\at}{at}
\newcommand{\bis}{\rightleftarrows}
\newcommand{\CK}{\textbf{CK}}
\newcommand{\DK}{\textbf{DK}}
\newcommand{\elcd}{\ensuremath{\mathcal{ELCD}}\xspace}
\newcommand{\eld}{\ensuremath{\mathcal{ELD}}\xspace}
\newcommand{\gr}{\textsc{gr}}
\newcommand{\grouppath}{\ensuremath{G_1,\ldots,G_n}\xspace}
\renewcommand{\iff}{\text{iff}}
\newcommand{\Lra}{\Leftrightarrow}
\newcommand{\lra}{\leftrightarrow}
\newcommand{\mcs}{\ensuremath{\mathcal{S}}}
\newcommand{\mfm}{\mathfrak{M}}
\newcommand{\mfn}{\mathfrak{N}}
\newcommand{\mfp}{\mathfrak{P}}
\newcommand{\mfs}{\mathfrak{S}}
\newcommand{\modelsp}{\models_{\sf{p}}}
\newcommand{\MUpath}{|_{G_1}|\cdots|_{G_n}}
\newcommand{\MUpaththree}{|_{G_1}|_{G_2}|_{G_3}}
\newcommand{\pacd}{\ensuremath{\mathcal{PACD}}\xspace}
\newcommand{\premodelEx}{\ensuremath{(S,\backsim,V)}}
\newcommand{\prop}{\textsc{prop}}
\newcommand{\Ra}{\Rightarrow}
\newcommand{\ra}{\rightarrow}
\newcommand{\RCD}{\textbf{RCD}\xspace}
\newcommand{\RD}{\textbf{RD}\xspace}
\newcommand{\RDC}{\textbf{RCD}\xspace}
\newcommand{\PACD}{\textbf{PACD}\xspace}
\newcommand{\rcd}{\ensuremath{\mathcal{RCD}}\xspace}
\newcommand{\rd}{\ensuremath{\mathcal{RD}}\xspace}
\newcommand{\RELpath}{\mathrel{|_{G_1}|\cdots|_{G_n}}}
\newcommand{\RELpaththree}{\mathrel{|_{G_1}|_{G_2}|_{G_3}}}
\newcommand{\rest}{\mathrel{|}}
\newcommand{\RGpath}{R_{G_1}\cdots R_{G_n}}
\newcommand{\RR}{\textbf{RR}}
\newcommand{\ruleofC}{\text{RR}\ensuremath{_C}}
\newcommand{\SFIVE}{\textbf{S5}}
\newcommand{\SFIVED}{\textbf{S5D}}
\newcommand{\tbis}{\rightleftarrows^{\textsc{\scriptsize t}}}
\newcommand{\zigag}{zig\ensuremath{_{ag}}}
\newcommand{\ziggr}{zig\ensuremath{_{gr}}}
\title{Resolving Distributed Knowledge}
\begin{document}

\title{Resolving Distributed Knowledge}

\author{
Thomas {\AA}gotnes
\institute{University of Bergen, Norway}
\email{thomas.agotnes@uib.no}
\and
Y\`i N. W\'ang
\institute{Zhejiang University, China}
\email{ynw@zju.edu.cn}
}
\def\titlerunning{Resolving Distributed Knowledge}
\def\authorrunning{Thomas {\AA}gotnes \& Y\`i N. W\'ang}
\maketitle

\begin{abstract}
  \emph{Distributed knowledge} is the sum of the knowledge in a group;
  what someone who is able to discern between two possible worlds
  whenever \emph{any} member of the group can discern between them,
  would know. Sometimes distributed knowledge is referred to as the
  potential knowledge of a group, or the joint knowledge they could
  obtain if they had unlimited means of communication.  In epistemic
  logic, the formula $D_G\phi$ is intended to express the fact that
  group $G$ has distributed knowledge of $\phi$, that there is enough
  information in the group to infer $\phi$.  But this is not the same
  as reasoning about \emph{what happens if the members of the group
    share their information}. In this paper we introduce an operator
  $R_G$, such that $R_G \phi$ means that $\phi$ is true after $G$ have
  shared all their information with each other -- after $G$'s
  distributed knowledge has been \emph{resolved}.  The $R_G$
  operators are called \emph{resolution operators}.  Semantically, we
  say that an expression $R_G\phi$ is true iff $\phi$ is true in what
  van Benthem \cite[p. 249]{van2011logical} calls ($G$'s)
  \emph{communication core}; the model update obtained by removing
  links to states for members of $G$ that are not linked by \emph{all}
  members of $G$.  We study logics with different combinations of
  resolution operators and operators for common and distributed
  knowledge. Of particular interest is the relationship between
  distributed and common knowledge. The main results are
  sound and complete axiomatizations.
\end{abstract}

\section{Introduction}
\label{sec:introduction}

In epistemic logic \cite{fhmv95rk,meyer1995elai,vanDitmarsch07del}
different notions of \emph{group knowledge} describe different ways
in which knowledge can be associated with a group. \emph{Common
  knowledge} is stronger than individual knowledge: that something is
common knowledge requires not only that everybody in the group knows
it, but that everybody knows that everybody knows it, and so
on. \emph{Distributed knowledge}, on the other hand, is weaker than
individual knowledge: distributed knowledge is knowledge that is
distributed throughout the group even if no individual knows it.

More concrete informal descriptions of the concept of distributed
knowledge abound, but they are often inaccurate descriptions of the
concept as formalized in standard epistemic logic.  A misconception is that
something is distributed knowledge in a group if the agents in the
group could get to know it after some (perhaps unlimited)
communications between them\footnote{Some examples of informal
  descriptions of distributed knowledge from the literature include
  ``A group has distributed knowledge of a fact $\phi$ if the
  knowledge of $\phi$ is distributed among its members, so that by
  pooling their knowledge together the members of the group can deduce
  $\phi$'' \cite{fhmv95rk}; ``.. it should be possible for the members
  of the group to establish $\phi$ through communication''
  \cite{van1999group,roelofsen06dk}; ``.. the knowledge that would
  result of the agents could somehow 'combine' their knowledge''
  \cite{van1999group}. These descriptions can at least give a reader
  the impression that distributed knowledge is about internal
  communication in the group of agents.}. To see that this
interpretation must be incorrect, consider the formula $D_{\{1,2\}} (p
\wedge \neg K_1 p)$. In this formula, $D_G\phi$ and $K_i\phi$ mean
that $\phi$ is distributed knowledge in the group $G$, and individual
knowledge of agent $i$, respectively. Thus, the formula says that it
is distributed knowledge among agents $1$ and $2$ that $p$ is true and
that agent $1$ does not know $p$. This formula is \emph{consistent}
(also when we assume that knowledge has the S5 properties).  However,
it is not possible that agents $1$ and $2$ both can get to know that
$p$ is true and that agent $1$ does not know that $p$ is true
(assuming the S5 properties of knowledge), no matter how much they
communicate (or ``pool'' their knowledge).  The ``problem'' here is that in a formula $D_G\psi$,
$\psi$ describes the possible states of the world as they were before
any communication or other events took place, so a more accurate
reading of $D_{\{1,2\}} (p \wedge \neg K_1 p)$ would perhaps be that
it follows from the combination of $1$ and $2$'s knowledge that
\emph{$p \wedge \neg K_1p$ were true before any communication or other
  events took place}.  More technically, the ``problem'' is due to
the standard compositional semantics of modal logic: in the evaluation
of $D_G \phi$, the $D_G$ operator picks out a number of states
considered possible by the group $G$ (actually the states considered
possible by \emph{all} members of the group), and then $\phi$ is
evaluated in each of these states \emph{in the original model, without
  any effect of the $D_G$ operator}.

But we don't really consider this a problem. There are other
interpretations of distributed knowledge where the consistency of the
mentioned formula makes perfect sense, such that distributed knowledge
being the knowledge of a third party, someone ``outside the system''
who somehow has access to the epistemic states of the group members.
It shows, however, that it does not make sense to interpret
distributed knowledge as something that is true after the agents in
the group have communicated \emph{with each other} -- with the
standard semantics.

In this paper we introduce and study an alternative group modality
$R_G$, where $R_G \phi$ means (roughly speaking) that $\phi$ is true
after the agents in the group have shared all their information with
each other. We call that \emph{resolving} distributed knowledge, and
the $R_G$ operators are called \emph{resolution operators}.  

Semantically, we say that an expression $R_G\phi$ is true iff $\phi$
is true in what van Benthem \cite[p. 249]{van2011logical} calls
($G$'s) \emph{communication core}; the model update obtained by
removing links to states for members of $G$ that are not linked by
\emph{all} members of $G$.  See Fig. \ref{fig:example} for an
illustration.

\begin{figure}[h]
\begin{center}
\mbox{
 \xymatrix{
   \State^{\neg p}_s \ar@{..}^{1}[r]& \State_{t}^{p}  \ar@{..}^{2}[r]\ar@{..}^{1,2}[d]& \State_{u}^{\neg p}\\ 
   & \State^{p}_v \ar@{..}^{1}[d]\\
   & \State^{p}_w
}}
\mbox{
 \xymatrix{
   &\State_{t}^{p}  \ar@{..}^{1,2}[d]\\
   & \State^{p}_v 
}}
\end{center} 
\caption{Example taken from \cite[p. 248]{van2011logical}. Model on
  the left, its communication core (for the set of all agents
  $\{1,2\}$) on the right. Reflexivity, symmetry and transitivity are
  implicitly assumed.}
  \label{fig:example}
\end{figure}
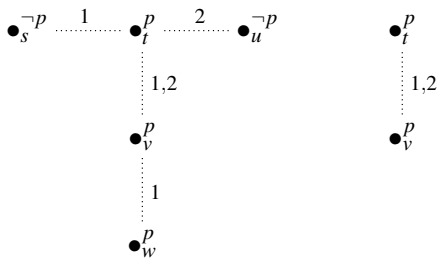
In this paper we capture that model transformation by the new
resolution operators, and study resulting logics. For example, the
formula $R_{\{1,2\}} (p \wedge \neg K_1 p)$ will be inconsistent in
the resulting logics. $R_{\{1,2\}} (p \wedge K_1 p)$ is true in state
$t$ in the model in Fig. \ref{fig:example}.  

This model transformation abstracts away from the issue of \emph{how}
the agents share their information; whether they \emph{communicate}
directly with each other and if so in which language, whether they are
informed by some outsider about the information other agents have and
if so how, and so on.  As noted by van Benthem
\cite[p. 249]{van2011logical}, the communication core cannot always be
obtained by \emph{public announcements} using the epistemic
language. Similarly, as noted by several researchers
\cite{van1999group,gerbrandy:1999,roelofsen06dk}, standard distributed
knowledge does not always follow logically from the knowledge of the
individual agents expressible in the epistemic language.  Our model,
like that of standard distributed knowledge, is purely semantic: we
assume that if an agent can discern between two different worlds, then
there exists some mechanism that results in other members of the group
being able to make the same distinction. This is further discussed
Section \ref{sec:conclusions}.

This model transformation models a particular kind of internal group
information sharing event. Exactly \emph{which} kind depends on what we
assume about what \emph{other} agents, i.e., agents that are not in
the group $G$ that resolve their knowledge, know about the fact that this
event is taking place. In this paper we will assume that it is common knowledge among the other
agents that $G$ resolve their knowledge -- but not what the
agents in $G$ actually learn. This corresponds to a natural class of
events: publicly observable private resolution of distributed
knowledge. An example is a meeting in a closed room, where it is
observed that a certain group meets in the room to share information.

We want to make it clear that we do not consider distributed knowledge
with standard semantics to be ``wrong''; the important thing is to be
clear about its meaning. In particular, the resolution operators are
not intended as a ``replacement'' of distributed knowledge operators,
but as a complement: they express different things. The logics we
study contains both types of operators, as well as common knowledge.
The main results are sound and complete axiomatizations.

Technically, the model transformation, which amounts to removing
certain edges, is similar to those found in the simplest dynamic
epistemic logics \cite{vanDitmarsch07del} such as public announcement
logics \cite{plaza1989logics}. \cite{van2011logical} has also pointed
out the close connection between the communication core and sequences
of public announcements. Public announcement logics with distributed
knowledge have been studied recently \cite{WA2013pacd}.  In
the absence of \emph{common} knowledge, we get reduction axioms for
public announcement logic with distributed knowledge. This turns out
to be the case for resolution operators as well.  It is not the case
in the presence of common knowledge, however.  

There is a close connection between the communication core and
common knowledge \cite{van2011logical}. By studying complete
axiomatizations of logics with the resolution operators we make some
aspects of that connection precise and give an answer to the question
``when does distributed knowledge become common knowledge?'' -- under
certain assumptions.

The rest of the paper is organized as follows. In the next section we
review some background definitions and results from the literature,
before we introduce logics with the new resolution operators in Section
\ref{sec:resolving}, where we also look at some properties of the
operators. In Section
\ref{sec:completeness} we prove completeness of resulting logics; the
most interesting case being epistemic logic with common and
distributed knowledge and resolution operators.  We discuss related and
future work and conclude in Section \ref{sec:conclusions}.

\section{Background}
\label{sec:background}

In this section we give a (necessarily brief) review of the main
background concepts from the literature.

We henceforth assume a countable set of propositional variables
$\prop$ and a finite set of agents $\ag$. We let \gr\ be the set of
all non-empty groups, i.e., $\gr=\wp(\ag)\setminus\emptyset$.

An \emph{epistemic model} over \prop\ and \ag\ (or just a \emph{model})
$\mfm = (S,\sim, V)$ where $S$ is a set of states (or worlds), $V:
\prop \rightarrow 2^S$ associates a set of states $V(p)$ with each
propositional variable $p$, and $\sim$ is a function that maps each
agent to a binary equivalence relation on $S$. We write $\sim_i$ for
$\mathord{\sim}(i)$.

$s \sim_i t$ means that agent $i$ cannot discern between states $s$
and $t$ -- if we are in $s$ she doesn't know whether we are in $t$,
and vice versa. Considering the distributed knowledge of a group $G$
-- a key concept in the following -- we define a derived relation
$\sim_G = \bigcap_{a\in G}\sim_a$ (it is easy to see that $\sim_G$ is
an equivalence relation). Intuitively, someone who has all the
knowledge of all the members of $G$ can discern between two states if
and only if at least one member of $G$ can discern between them. We
will also consider common knowledge. A similar relation modeling the
common knowledge of a group is obtained by taking the transitive
closure of the union of the individual relations: $\backsim_{C_G} =
(\bigcup_{i\in G}\sim_i)^*$.

\begin{definition}
Below are several languages from the literature.
$$\begin{array}{ll}
(\eld)& \phi ::= p \mid \neg \phi \mid \phi \wedge \phi \mid K_i \phi \mid D_G \phi\\
(\elcd)&\phi ::= p \mid \neg \phi \mid \phi \wedge \phi \mid K_i \phi \mid D_G \phi \mid C_G \phi\\
(\pacd)&\phi ::= p \mid \neg \phi \mid \phi \wedge \phi \mid K_i \phi \mid D_G \phi \mid C_G \phi\mid [\phi]\phi,
\end{array}$$
where $p \in \prop$, $i \in \ag$ and $G \in\gr$. We use the usual
propositional derived operators, as well as $E_G\phi$ for $\bigwedge_{i\in G}K_i\phi$.
\end{definition}
\eld and \elcd are static epistemic languages with distributed
knowledge, and with distributed and common knowledge,
respectively. These are the languages we will extend with resolution
operators in the next section. We will also be interested in \pacd,
the language for public announcement logic with both
common knowledge and distributed knowledge, when we look at
completeness proofs.

Satisfaction of a formula $\varphi$ of any of these languages in a state
$m$ of a model $\mfm$, denoted $\mfm,m \models \varphi$, is defined
recursively by the following clauses:
$$\begin{array}{@{}ll@{\ \ \iff\ \ }l@{}}
&\mfm,m\models p 				&m\in V(p)\\
&\mfm,m\models\neg\varphi		&\mfm,m\not\models\varphi\\
&\mfm,m\models\varphi\wedge\psi	&\mfm,m\models\varphi\ \&\ \mfm,m\models\psi\\
&\mfm,m\models K_a \phi& \forall n\in S.\ (m\sim_a n\Ra \mfm,n\models\phi)\\
&\mfm,m\models D_G\phi		&\forall n\in S.\ (m \sim_G n\Ra\mfm,n\models\phi)\\
&\mfm,m\models C_G\phi		&\forall n\in S.\ (m (\bigcup_{i \in G}\sim_i)^* n\Ra\mfm,n\models\phi)\\
&\mfm,m\models[\psi]\phi&\mfm,m\models\psi\Ra\mfm|\psi,m\models\phi.
\end{array}$$
where $R^*$ denotes the transitive closure of $R$ and $\mfm|\psi$ is the submodel
of $\mfm$ restricted to $\set{m\in M}{\mfm,m\models\psi}$.
\emph{Validity} is defined as usual: $\models \phi$ means that $\mfm,m
\models \phi$ for all $\mfm$ and $m$.

We now define some axiom schemata and rules. The classical ``S5'' proof system for multi-agent epistemic logic, denoted (\SFIVE), consists of the following axioms and rules:
$$\begin{tabular}{ll}
(PC)& 	instances of tautologies\\
(K)& 	$K_i(\phi\rightarrow\psi)\rightarrow(K_i\phi\rightarrow K_i\psi)$\\
(T)&	$K_i\phi\ra\phi$\\
(4)&	$K_i\phi\ra K_iK_i\phi$\\
(5)&	$\neg K_i\phi\ra K_i\neg K_i\phi$\\
(MP)& 	from $\phi$ and $\phi\ra\psi$ infer $\psi$\\
(N)&	from $\phi$ infer $K_i\phi$.
\end{tabular}$$

Axioms for distributed knowledge, denoted (\DK):
$$\begin{tabular}{ll@{\quad}ll}
(K$_D$)&$D_G(\phi\ra\psi)\ra(D_G\phi\ra D_G\psi)$\\
(T$_D$)&$D_G\phi\ra\phi$\\
(5$_D$)&$\neg D_G\phi\ra D_G\neg D_G\phi$\\
(D1)&$K_i\phi\lra D_i\phi$\\
(D2)&$D_G\phi\ra D_H\phi$, if $G\subseteq H$.
\end{tabular}$$

Axioms and rules for common knowledge, denoted (\CK):
$$\begin{tabular}{ll@{\quad}ll}
(K$_C$)&$C_G(\phi\ra\psi)\ra(C_G\phi\ra C_G\psi)$\\
(T$_C$)&$C_G\phi\ra\phi$\\
(C1)&$C_G\phi\ra E_GC_G\phi$\\
(C2)&$C_G(\phi\ra E_G\phi)\ra(\phi\ra C_G\phi)$\\
(N$_C$)& from $\phi$ infer $C_G\phi$.
\end{tabular}$$

The system that consists of (\SFIVE) and (\DK) over the language
$\eld$, denoted \SFIVED, is a sound and complete axiomatization of all $\eld$
validities.  The system that consists of (\SFIVE), (\DK) and (\CK)
over the language $\elcd$ is a sound and complete axiomatization of
all $\elcd$ validities.

\section{\bf\fontsize{11pt}{1em}\selectfont Resolving Distributed Knowledge}
\label{sec:resolving}

We want to model the event that $G$ resolves their knowledge. An
immediate question is: whenever the group $G$ is a proper subset of
the set of all agents, what do \emph{the other} agents know about the
fact that this event takes place? Here we will model situations where
it is common knowledge among the other agents \emph{that} the event
takes place, but not \emph{what} the members of the group learn.  As
discussed in the introduction, this corresponds to a natural class of
information sharing events, namely publicly observable private
communication, such as a meeting in a closed room that is observed to
be taking place. This is captured by a \emph{global} model update: in
every state, remove a link to another state for any member of $G$
whenever it is not the case that there is a link to that state for
\emph{all} members of $G$.

Formally, given a model $\mfm = (S,\sim, V)$ and a group of agents
$G$, the \emph{(global) $G$-resolved update of $\mfm$} is the model
$\mfm|_G$ where $\mfm|_G =
(S,\sim\rest_G, V)$ and
$$(\sim\rest_G)_i = \left\{\begin{array}{ll}
        \bigcap_{j \in G} \sim_j, & i \in G,\\
        \sim_i, & \mbox{otherwise}.
      \end{array}\right.$$

We consider the following new languages with resolution operators.
\begin{definition}[Languages]
$$\begin{array}{ll}
(\rd)& \phi ::= p \mid \neg \phi \mid \phi \wedge \phi \mid K_i \phi \mid D_G \phi \mid \rg{G} \phi\\
(\rcd)&\phi ::= p \mid \neg \phi \mid \phi \wedge \phi \mid K_i \phi \mid D_G \phi \mid C_G \phi \mid \rg{G}\phi,\\
\end{array}$$
where $p \in \prop$, $i \in \ag$ and $G \in\gr$.
\end{definition}

The interpretation of these languages in a pointed model is defined as usual, with the following additional clause for the resolution operator:
\[\mfm,s \models \rg{G} \phi \quad \text{iff} \quad \mfm|_G,s \models \phi.\]

A couple of observations. Recall that we write $\sim_H$ for $\bigcap_{i\in H}\sim_i$. Thus,
$$(\sim\rest_G)_i=\left\{\begin{array}{@{}l@{\ \ }l@{}}
\sim_G,&i\in G,\\
\sim_i,&i\notin G.\\
\end{array}\right.
\quad
(\sim\rest_G)_H=\left\{\begin{array}{@{}l@{\ \ }l@{}}
\sim_H,&G\cap H=\emptyset,\\
\sim_{G\cup H},&G\cap H\neq\emptyset.\\
\end{array}\right.$$
Also note that $(\sim\rest_G)_i=(\sim\rest_G)_{\{i\}}$.

\subsection{Some Validities}
\label{sec:reduction}

Let us start with a trivial validity: resolution has no effect for a
singleton coalition.
\begin{proposition}\label{prop:val1}
The following is valid, where $i\in\ag$ and $\phi\in\rcd$: $R_{\{i\}}\phi\lra\phi$.
\end{proposition}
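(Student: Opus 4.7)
The plan is to show that the update $\mfm|_{\{i\}}$ is literally identical to $\mfm$ as a model, so that the semantic clause for $R_G$ collapses to the trivial equivalence.

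First I would unfold the definition of $\sim\rest_{\{i\}}$. For the agent $i$ itself, $(\sim\rest_{\{i\}})_i = \bigcap_{j\in\{i\}} \sim_j = \sim_i$. For any other agent $j \neq i$, the definition gives $(\sim\rest_{\{i\}})_j = \sim_j$ directly. Hence $\sim\rest_{\{i\}}$ coincides pointwise with $\sim$, and since the state set $S$ and valuation $V$ are carried over unchanged, $\mfm|_{\{i\}} = (S, \sim\rest_{\{i\}}, V) = (S, \sim, V) = \mfm$.

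Since $\mfm|_{\{i\}}$ and $\mfm$ are the same model, satisfaction of any \rcd-formula at any state is the same in both; no induction on $\phi$ is required. Applying the semantic clause for the resolution operator, $\mfm, s \models R_{\{i\}}\phi$ iff $\mfm|_{\{i\}}, s \models \phi$ iff $\mfm, s \models \phi$. Since $s$ and $\mfm$ were arbitrary, $R_{\{i\}}\phi \lra \phi$ is valid.

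There is no real obstacle to this proposition: it is entirely a matter of unpacking the definition of the global $G$-resolved update in the degenerate case $|G|=1$, where the ``intersection'' over a single-agent group is a no-op. The only thing worth being careful about is that the claim holds uniformly for all $\phi\in\rcd$ (including formulas containing further $R$-operators, $D_G$'s and $C_G$'s), but this is immediate from the fact that $\mfm|_{\{i\}}$ and $\mfm$ are identical as models rather than merely bisimilar, so no case analysis on the shape of $\phi$ is needed.
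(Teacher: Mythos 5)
Your proof is correct and is exactly the argument the paper has in mind (the paper omits it as a ``trivial validity''): for $G=\{i\}$ the intersection $\bigcap_{j\in\{i\}}\sim_j$ is just $\sim_i$, so $\mfm|_{\{i\}}=\mfm$ and the semantic clause for $R_{\{i\}}$ collapses. Your remark that the two models are identical rather than merely bisimilar, so no induction on $\phi$ is needed, is the right observation.
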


More interesting are the following properties.
\begin{proposition}[Reduction Principles]\label{prop:reduction}
\ \  The following are valid, where $G,H\in\gr$, $p \in \prop$ and $\phi\in\rcd$:
  \begin{enumerate}
  \item\label{red-base} $\rg{G} p \leftrightarrow p$ 
  \item\label{red-conj} $\rg{G}(\phi \wedge \psi) \leftrightarrow \rg{G}\phi \wedge \rg{G}\psi$
  \item\label{red-neg} $\rg{G}\neg\phi \leftrightarrow \neg\rg{G}\phi$
  \item $\rg{G}K_i \phi \leftrightarrow D_{G} \rg{G}\phi$, when $i \in G$ 
  \item $\rg{G}K_i \phi \leftrightarrow K_i \rg{G}\phi$, when $i \not\in G$ 
  \item $\rg{G}D_H\phi \leftrightarrow D_{G \cup H}\rg{G}\phi$, when $G \cap H \neq \emptyset$ 
  \item $\rg{G}D_H\phi \leftrightarrow D_{H}\rg{G}\phi$, when $G \cap H = \emptyset$.
  \end{enumerate}
\end{proposition}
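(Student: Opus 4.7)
The plan is to prove each equivalence semantically, by unfolding the clause $\mathfrak{M},s \models \rg{G}\phi$ iff $\mathfrak{M}|_G,s\models\phi$ and exploiting the fact that $\mathfrak{M}|_G$ shares its domain and valuation with $\mathfrak{M}$, differing only in the accessibility relations for agents in $G$. Items (1)--(3) are then immediate: $\mathfrak{M}|_G,s \models p$ iff $s \in V(p)$ iff $\mathfrak{M},s\models p$, and conjunction and negation commute with the semantic clause for $\rg{G}$ because the clause carries no side condition involving the update.

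For items (4)--(5), I will expand $\mathfrak{M},s \models \rg{G}K_i\phi$ as ``for all $t$, if $s\,(\sim\rest_G)_i\,t$ then $\mathfrak{M}|_G,t\models\phi$,'' and appeal to the identity $(\sim\rest_G)_i = \sim_G$ when $i\in G$, and $(\sim\rest_G)_i = \sim_i$ when $i\notin G$, stated immediately before the proposition. Re-applying the semantic clause for $\rg{G}$ to convert each $\mathfrak{M}|_G,t\models\phi$ back into $\mathfrak{M},t\models\rg{G}\phi$ then produces $D_G\rg{G}\phi$ when $i\in G$ and $K_i\rg{G}\phi$ when $i\notin G$, which is precisely the right-hand side in each case.

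Items (6)--(7) proceed analogously, using the identity $(\sim\rest_G)_H = \sim_{G\cup H}$ when $G\cap H\neq\emptyset$ and $(\sim\rest_G)_H = \sim_H$ when $G\cap H = \emptyset$, again stated just before the proposition. This relational identity is the main substantive point: when $G$ and $H$ share an agent, resolving $G$'s knowledge strictly strengthens $H$'s distributed-knowledge relation to that of $G\cup H$, because every member of $H$ that also lies in $G$ now accesses through $\sim_G$. Formally, one splits $H$ into $H\cap G$ (contributing $\sim_G$ to the intersection) and $H\setminus G$ (contributing $\bigcap_{j\in H\setminus G}\sim_j$), and observes that together these yield $\bigcap_{j\in G\cup H}\sim_j = \sim_{G\cup H}$.

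There is no serious obstacle: each clause reduces to a short chain of ``iffs'' once the two relational identities are in hand. The only point requiring a modicum of care is verifying those identities from the raw definition of $\sim\rest_G$ (in particular the $G\cap H\neq\emptyset$ case), but this is essentially a one-line set-theoretic computation. After that, the argument is purely mechanical unfolding and refolding of the semantic clause for $\rg{G}$, with no induction required since each right-hand side still contains $\rg{G}\phi$ rather than $\phi$ alone.
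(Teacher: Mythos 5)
Your proof is correct and is exactly the intended argument: the paper omits a proof of Proposition \ref{prop:reduction} precisely because it follows by routine unfolding of the semantics of $R_G$ together with the two relational identities for $(\sim\rest_G)_i$ and $(\sim\rest_G)_H$ stated immediately beforehand, which is what you do. Nothing to add.
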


These properties are reduction principles, of the type known from
public announcement logic: they allow us to simplify expressions
involving resolution operators. If we have such principles for the
combination of resolution with all other operators we can eliminate
resolution operators altogether. There are two cases missing above:
$R_GC_H$ and $R_GR_H$\footnote{The lack of a reduction axiom for the
  general $\rg{G}\rg{H}\phi$ case does not mean we cannot get a
  reduction in the language \rd: we can simply do the reduction
  ``inside-out''.}. We consider them next.

\subsubsection{Common Knowledge}

First, after the \emph{grand} coalition have resolved their knowledge, then all the
distributed information in the system is common knowledge: there is no longer
a distinction between distributed and common knowledge:

\begin{proposition}\label{red-grand-coalition}
For any $\phi \in \rcd$: $R_{\ag}C_{\ag}\phi\lra R_{\ag}D_{\ag}\phi$.
\end{proposition}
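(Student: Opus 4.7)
The plan is to unfold both sides using the semantic clause for $\rg{\ag}$, reducing the claim to showing that $\mfm|_{\ag},s \models C_{\ag}\phi$ iff $\mfm|_{\ag},s \models D_{\ag}\phi$ for every pointed model $(\mfm,s)$. So it suffices to show that in any resolved model $\mfm|_{\ag}$ the relations used to interpret $C_{\ag}$ and $D_{\ag}$ coincide.

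For this, I would compute the relations appearing in the semantic clauses using the observations just recorded in the paper. Since $\ag\cap\ag\neq\emptyset$, for every $i\in\ag$ we have $(\sim\rest_{\ag})_i = \sim_{\ag}$. Consequently
\[
\bigcap_{i\in\ag}(\sim\rest_{\ag})_i \;=\; \sim_{\ag}
\qquad\text{and}\qquad
\bigcup_{i\in\ag}(\sim\rest_{\ag})_i \;=\; \sim_{\ag}.
\]
The first of these is exactly the relation used to interpret $D_{\ag}$ in $\mfm|_{\ag}$. For the second, note that $\sim_{\ag}=\bigcap_{a\in\ag}\sim_a$ is an intersection of equivalence relations, hence itself reflexive and transitive; therefore $(\sim_{\ag})^{*}=\sim_{\ag}$, which is the relation used to interpret $C_{\ag}$ in $\mfm|_{\ag}$.

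Putting these together, in $\mfm|_{\ag}$ the semantic clauses for $C_{\ag}\phi$ and $D_{\ag}\phi$ quantify over exactly the same set of states, so they are equivalent at every point. Pulling the equivalence back through the $\rg{\ag}$ clause then yields $\mfm,s\models R_{\ag}C_{\ag}\phi$ iff $\mfm,s\models R_{\ag}D_{\ag}\phi$, which is the validity to be proved. There is no real obstacle here: the only thing that could go wrong would be if the transitive closure used for $C_{\ag}$ genuinely added pairs, but the closure under intersection of $\ag$ equivalence relations rules that out.
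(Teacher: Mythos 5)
Your proof is correct and follows essentially the same route as the paper's: unfold the $R_{\ag}$ clause, observe that $(\sim\rest_{\ag})_i=\sim_{\ag}$ for every $i$, and conclude that both the intersection (for $D_{\ag}$) and the transitive closure of the union (for $C_{\ag}$) collapse to $\sim_{\ag}$, the latter because $\sim_{\ag}$ is already an equivalence relation. Nothing is missing.
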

\begin{proof}
\ref{red-grand-coalition}.
Given a model $\mfm=(S,\sim,V)$ and $s\in S$,
$$\begin{array}{llll}
&\mfm,s\models R_{\ag}C_{\ag}\phi\\
\iff&\mfm|_{\ag},s\models C_{\ag}\phi\\
\iff&\forall t\in S.\ (s(\sim\rest_{\ag})_{C_{\ag}}t\Ra\mfm|_{\ag},s\models\phi)\\
\iff&\forall t\in S.\ (s\sim_{\ag}t\Ra\mfm|_{\ag},s\models\phi)\quad(\dag)\\
\iff&\mfm|_{\ag},s\models D_{\ag}\phi\\
\iff&\mfm,s\models R_{\ag}D_{\ag}\phi,\\
\end{array}$$
where for $(\dag)$ we show that $(\sim\rest_{\ag})_{C_{\ag}}=\sim_{\ag}$. This is easy: by definition we can verify that for all $i\in\ag$, $(\sim\rest_{\ag})_i=\sim_{\ag}$; hence $(\sim\rest_{\ag})_{C_{\ag}}=(\bigcup_{i\in\ag}(\sim\rest_\ag)_i)^*=(\sim_\ag)^*=\sim_{\ag}$.
\end{proof}

For the general case, as in the case of distributed knowledge, we have that the resolution operators and common knowledge operators commute when the groups are \emph{disjoint}:

\begin{proposition}\label{prop:red-ck}
Let $i$ be an agent, $G$ and $H$ groups of agents and $\phi \in \rcd$. The following hold:
\begin{enumerate}
\item\label{RC-no-intersection}   If $G \cap H = \emptyset$, then $\models\rg{G}C_H \phi \leftrightarrow C_H\rg{G}\phi$
\item\label{RC-subset} If $G\supseteq H$ and $i\in G$, then $\models R_GC_H\phi\lra R_GK_i\phi\lra D_GR_G\phi$.
\end{enumerate}
\end{proposition}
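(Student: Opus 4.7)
The plan is to mirror the style of the proof of Proposition \ref{red-grand-coalition}: unfold the semantics of each formula in terms of the relations on the updated model $\mfm|_G$, and then show that the relevant relations coincide. The only real work is identifying what $(\sim\rest_G)_{C_H}$ is in each of the two cases.

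For part \ref{RC-no-intersection}, I will exploit $G\cap H=\emptyset$ to observe that for every $j\in H$ we have $j\notin G$, hence $(\sim\rest_G)_j = \sim_j$. Taking unions and transitive closures,
\[
(\sim\rest_G)_{C_H} \;=\; \bigl(\textstyle\bigcup_{j\in H}(\sim\rest_G)_j\bigr)^{*} \;=\; \bigl(\textstyle\bigcup_{j\in H}\sim_j\bigr)^{*} \;=\; \sim_{C_H}.
\]
With this equality, the chain $\mfm,s\models R_GC_H\phi \iff \mfm|_G,s\models C_H\phi \iff \forall t\,(s\sim_{C_H}t \Ra \mfm|_G,t\models\phi) \iff \forall t\,(s\sim_{C_H}t\Ra \mfm,t\models R_G\phi) \iff \mfm,s\models C_HR_G\phi$ closes the case.

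For part \ref{RC-subset}, the crucial computation is that for every $j\in G$ (and in particular for every $j\in H$, since $H\subseteq G$, and for our chosen $i\in G$) we have $(\sim\rest_G)_j = \sim_G$. Consequently
\[
(\sim\rest_G)_{C_H} \;=\; \bigl(\textstyle\bigcup_{j\in H}(\sim\rest_G)_j\bigr)^{*} \;=\; (\sim_G)^{*} \;=\; \sim_G,
\]
using that $\sim_G$ is already an equivalence relation (hence transitive). Unfolding semantics then shows that each of the three formulas is equivalent to the single condition ``$\forall t\,(s\sim_G t \Ra \mfm|_G,t\models\phi)$'': for $R_GC_H\phi$ this is immediate from the displayed equality; for $R_GK_i\phi$ it uses $(\sim\rest_G)_i=\sim_G$; and for $D_GR_G\phi$ it follows from the semantics of $D_G$ together with the clause for $R_G$.

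I do not expect any genuine obstacle. The argument is essentially bookkeeping about how $\sim\rest_G$ behaves, and the two observations noted at the end of Section \ref{sec:resolving} (that $(\sim\rest_G)_H=\sim_H$ when $G\cap H=\emptyset$ and $(\sim\rest_G)_H=\sim_{G\cup H}$ when $G\cap H\neq\emptyset$) already encapsulate the cases I need. The only subtlety is remembering that transitive closure of an equivalence relation is itself, which is what collapses the common-knowledge modality to a distributed-knowledge modality inside $\mfm|_G$ whenever $H\subseteq G$.
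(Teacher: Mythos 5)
Your proposal is correct and follows essentially the same route as the paper's proof: both parts hinge on computing $(\sim\rest_G)_{C_H}$, which equals $\sim_{C_H}$ when $G\cap H=\emptyset$ and collapses to $\sim_G$ when $H\subseteq G$ (since the transitive closure of an equivalence relation is itself). The only cosmetic difference is that the paper derives $R_GC_H\phi\lra R_GK_i\phi$ and then cites Proposition \ref{prop:reduction} for $R_GK_i\phi\lra D_GR_G\phi$, whereas you reduce all three formulas to one common semantic condition.
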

\begin{proof}
See the appendix.
\end{proof}

However, this does not hold for overlapping groups $G$ and $H$. In
general, we have that (see the proof of the proposition above) $\mfm,s
\models \rg{G}C_H \phi$ iff $\mfm|_G,t \models \phi$
for any $(s,t) \in \sim^{*'}_H$, where $\sim^{*'}_H = (\bigcap_{i \in
  G} \sim_i \cup \bigcup_{i \in H \setminus G} \sim_i)^*$.  This does
not seem to be reducible.

\subsubsection{Iterated resolution}

What about $\rg{G}\rg{H}\phi$? In extreme cases, we have:
\begin{proposition}
The following are valid, where $G,H\in\gr$ and $\phi\in\rcd$:
\begin{enumerate}
\item $R_GR_H\phi\lra R_HR_G\phi$, if $G\cap H=\emptyset$
\item $R_GR_G\phi\lra R_G\phi$.
\end{enumerate}
\end{proposition}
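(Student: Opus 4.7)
The plan is to prove both equivalences by reducing them to an equality of updated models: that is, I would show (1) $(\mfm|_G)|_H = (\mfm|_H)|_G$ when $G\cap H=\emptyset$, and (2) $(\mfm|_G)|_G = \mfm|_G$. Since $R$-updates do not touch the carrier set or the valuation, both model equalities reduce to verifying that the equivalence relations coincide agent by agent; once that is done, applying the semantic clause $\mfm,s\models R_G\phi\iff\mfm|_G,s\models\phi$ twice gives the two validities immediately.

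For part (1), I would fix an arbitrary agent $i$ and split into three cases: $i\in G$, $i\in H$, or $i\notin G\cup H$. In each case I would unfold $((\sim\rest_G)\rest_H)_i$ using the definition of $\rest$, relying crucially on $G\cap H=\emptyset$ so that every $j\in H$ satisfies $j\notin G$ (hence $(\sim\rest_G)_j=\sim_j$), and symmetrically. This yields $(\sim\rest_G)\rest_H)_i=\sim_H$ if $i\in H$, $\sim_G$ if $i\in G$, and $\sim_i$ otherwise. The same computation on the other side produces the same three values, so the relations agree.

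For part (2), a single case distinction on whether $i\in G$ suffices. If $i\in G$, then using the observation recorded just before Section 3.1 that $(\sim\rest_G)_j=\sim_G$ for every $j\in G$, one gets $((\sim\rest_G)\rest_G)_i=\bigcap_{j\in G}(\sim\rest_G)_j=\bigcap_{j\in G}\sim_G=\sim_G=(\sim\rest_G)_i$. If $i\notin G$, then $((\sim\rest_G)\rest_G)_i=(\sim\rest_G)_i=\sim_i$ directly. So $(\mfm|_G)|_G=\mfm|_G$, and the equivalence follows from the semantic clause.

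There is essentially no real obstacle: the entire argument is bookkeeping on the definition of $\sim\rest_G$. The only place requiring a bit of care is case (1), where the disjointness hypothesis must be invoked at exactly the moment one unfolds the inner update restricted to agents in the outer group, to ensure the inner update leaves those agents' relations unchanged before the outer intersection is taken.
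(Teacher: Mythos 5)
Your proof is correct and matches the paper's (implicit) argument: the paper gives no separate proof for this proposition, but treats it as an instance of the straightforward computation of iterated updates from the semantic definition (cf. Proposition \ref{prop:res-red}), which is exactly the agent-by-agent bookkeeping you carry out. Reducing both equivalences to literal equality of the updated models, with the disjointness hypothesis invoked precisely where the inner update must leave the outer group's relations untouched, is the intended route.
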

However, in the general case there does not seem to be a reduction
axiom in this case. In particular, $R_GR_H\phi$ is not equivalent to
$\rg{G \cup H}\phi$.

Let us consider an example of iterated resolution.

\begin{example}[Triple update]\label{ex:triup}
Let $\mfm = (S, \sim, V)$ and\\ $\mfm\MUpaththree = (S, \sim\RELpaththree, V)$. For any agent $i$,
for any number $x$, we write $G_x$ for ``$i\in G_x$'', and $\overline{G_x}$ for ``$i\notin G_x$''. Then
$$(\sim\RELpaththree)_i=\left\{
\begin{array}{@{\text{if }}l@{\ \ }l}
\overline{G_1G_2G_3}:&\sim_i\\
G_1\overline{G_2G_3}:&\sim_{G_1}\\
\overline{G_1}G_2\overline{G_3}
	\left\{\begin{array}{@{}l@{}}
	G_1\cap G_2=\emptyset:\\
	G_1\cap G_2\neq\emptyset:\\
	\end{array}\right.&
	\begin{array}{@{}l@{}}
	\sim_{G_2}\\
	\sim_{G_1\cup G_2}\\
	\end{array}\\
G_1G_2\overline{G_3}:&\sim_{G_1\cup G_2}\\ 
\overline{G_1G_2}G_3:&\sim_{G_3}\\
G_1\overline{G_2}G_3:&\sim_{G_1\cup G_3}\\
\overline{G_1}G_2G_3
	\left\{\begin{array}{@{}l@{}}
	G_1\cap G_2=\emptyset:\\
	G_1\cap G_2\neq\emptyset:\\
	\end{array}\right.&
	\begin{array}{@{}l@{}}
	\sim_{G_2\cup G_3}\\
	\sim_{G_1\cup G_2\cup G_3}\\
	\end{array}\\
G_1G_2G_3:&\sim_{G_1\cup G_2\cup G_3}\\
\end{array}\right.$$
\end{example}

In general we get the following (the proof is straightforward from the
semantic definition).
\begin{proposition}\label{prop:res-red}
Let $M = (S, \sim, V)$ and $M\MUpath = (S, \sim\RELpath, V)$. Then, following the notation of Example \ref{ex:triup}, for any $i\in\ag$,
  $$(\sim\RELpath)_i=\left\{
  \begin{array}{ll}
  \sim_i,&\text{if }\overline{G_1\cdots G_n}\\
%  \text{if not starting with }\overline{G_1}G_2,&\sim_\Theta\\
  \sim_{G_1\cup\Theta},&\text{if starting with }\overline{G_1}G_2\text{ and }\\
										 & G_1\cap G_2\neq\emptyset\\
  \sim_\Theta,&\text{otherwise}\\
  \end{array}\right.$$
where $\Theta$ is the union of all $G_x$ such that $i\in G_x$.
\end{proposition}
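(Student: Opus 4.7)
The plan is to induct on $n$, peeling updates off from the right using the one-step law
\[
(\sigma|_G)_i \;=\; \begin{cases} \bigcap_{k\in G}\sigma_k & \text{if } i\in G,\\ \sigma_i & \text{if } i\notin G,\end{cases}
\]
which is just the definition of the $G$-resolved update instantiated at an arbitrary relation $\sigma$.

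For the base case $n=1$, I would appeal directly to the defining clause of $|_{G_1}$: when $i\in G_1$ it yields $(\sim|_{G_1})_i=\sim_{G_1}=\sim_\Theta$, and when $i\notin G_1$ it yields $(\sim|_{G_1})_i=\sim_i$; the ``starting with $\overline{G_1}G_2$'' clause is vacuous. For the inductive step, I would set $\sigma={\sim}\mathrel{|_{G_1}|\cdots|_{G_{n-1}}}$ and split on whether $i\in G_n$. If $i\notin G_n$, then $(\sim\RELpath)_i=\sigma_i$; since neither $\Theta$ nor the starting-with status (which is determined by $G_1$ and $G_2$ alone) changes when $G_n$ is removed from the sequence, the inductive hypothesis closes this case. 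If $i\in G_n$, then $(\sim\RELpath)_i=\bigcap_{k\in G_n}\sigma_k$; I would apply the inductive hypothesis to each $k\in G_n$ to get an explicit description of $\sigma_k$, then use the basic identity $\sim_A\cap\sim_B=\sim_{A\cup B}$ to collapse the intersection down to the claimed $\sim_\Theta$ (or $\sim_{G_1\cup\Theta}$, when the starting-with clause is triggered).

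The main obstacle will be the bookkeeping around the $\sim_{G_1\cup\Theta}$ clause inside the $i\in G_n$ sub-case: for each $k\in G_n$ the inductive hypothesis may deliver either a plain $\sim_{\Theta_k}$ or a ``boosted'' $\sim_{G_1\cup\Theta_k}$, and one has to verify that intersecting over $k\in G_n$ reproduces the starting-with dichotomy for the full sequence --- namely, that the $G_1$ boost surfaces in the final answer precisely when $i\notin G_1$, $i\in G_2$, and $G_1\cap G_2\neq\emptyset$, and is otherwise either absent or harmlessly absorbed into $\Theta$. This reduces to a routine case analysis on membership patterns and pairwise intersections, in keeping with the paper's own remark that the proof is straightforward from the semantic definition.
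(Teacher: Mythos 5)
Your strategy (induction on $n$, peeling the updates off from the right via the one-step law) is the natural one, but the verification you defer as ``routine case analysis'' is exactly where the argument breaks down, and it cannot be completed, because the stated identity is itself false for $n\geq 3$. In the sub-case $i\in G_n$ you must show that $\bigcap_{k\in G_n}\sigma_k$ collapses to $\sim_\Theta$ (or $\sim_{G_1\cup\Theta}$), where by the induction hypothesis each $\sigma_k$ is $\sim_{\Theta_k}$ (or its boosted variant) with $\Theta_k$ the union of those $G_x$, $x<n$, that contain $k$. For $k\in G_n$ with $k\neq i$, the set $\Theta_k$ may contain groups $G_x$ with $i\notin G_x$ and $x\neq 1$; these are neither ``absent'' nor ``harmlessly absorbed into $\Theta$'', so the intersection is in general strictly finer than $\sim_\Theta$. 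Concretely, take $G_1=\{1,2\}$, $G_2=\{3\}$, $G_3=\{1,4\}$ and $i=4$: then $(\sim\RELpaththree)_4=(\sim\rest_{G_1}\rest_{G_2})_1\cap(\sim\rest_{G_1}\rest_{G_2})_4=(\sim_1\cap\sim_2)\cap\sim_4$, whereas the displayed formula gives the third clause (since $4\notin G_2$), namely $\sim_\Theta=\sim_{G_3}=\sim_1\cap\sim_4$; these differ whenever $\sim_1\cap\sim_4\not\subseteq\sim_2$. The same omission is already present in the corresponding rows of Example \ref{ex:triup}: only the interaction between $G_1$ and $G_2$ is singled out, while the analogous interactions further down the sequence are missed. (One can cross-check against the paper's own reduction axioms applied inside-out: $R_{G_3}K_4\phi\lra D_{\{1,4\}}R_{G_3}\phi$, then RD2 for $G_2$, then RD1 for $G_1$ yields $D_{\{1,2,4\}}R_{G_1}R_{G_2}R_{G_3}\phi$, not $D_{\{1,4\}}$.)

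The correct closed form is the one the paper uses immediately afterwards, $(\sim\RELpath)_i=\sim_{\delta(\{i\},\grouppath)}$ from Proposition \ref{cor-rd-red}: the $\delta$ function accumulates from the right and absorbs \emph{every} group that meets the set accumulated so far, not only those containing $i$ and not only $G_1$. If you run your peeling induction but let the accumulated index set grow by that rule, the argument goes through essentially as you outline it. As written, however, your sketch asserts a collapse that does not hold, so this is a genuine gap rather than a difference in route; the paper offers no proof of this proposition beyond calling it straightforward, and an honest execution of your plan would in fact surface the discrepancy rather than close it.
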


\subsection{Reduction Normal Form for Individual and Distributed Knowledge}
As we see from the previous section, the reduction axioms for
individual knowledge and distributed knowledge both contain two
distinct cases, and the principles of iterated resolution become more
complicated. In this section we give a unique form for such
reductions, which will be of use later when we prove completeness. We
shall call it \emph{reduction normal form} for individual and
distributed knowledge.

\begin{definition}[$\delta$ function]
Given an agent $i$, a group $H$, and a sequence of groups \grouppath, we define a function $\delta$ as follows:
$$\begin{array}{r@{\ }c@{\ }l}
\delta_0&=&\left\{\begin{array}{l@{\hspace{3.16em}}l}G_n\cup H,&G_n\cap H\neq\emptyset\\H,&G_n\cap H=\emptyset\end{array}\right.\\[2ex]
\delta_{x}&=&\left\{\begin{array}{ll}G_{n-x}\cup\delta_{x-1},&G_{n-x}\cap \delta_{x-1}\neq\emptyset\\\delta_{x-1},&G_{n-x}\cap\delta_{x-1}=\emptyset\end{array}\right.\\[2ex]
\delta(H,\grouppath)&=&\quad\ \delta_n.\\
\end{array}$$

Clearly
$\delta(H,\grouppath)\subseteq H\cup G_1\cup\cdots\cup G_n$. We simply write $\delta$ instead of $\delta(H,\grouppath)$ when its parameters are clear in the context.
\end{definition}

\begin{proposition}\label{cor-rd-red}
Let $i\in\ag$, $\grouppath,H\in\gr$, $M = (S, \sim, V)$ and
$M\MUpath = (S, \sim\RELpath, V)$. Then,
\begin{enumerate}
\item\label{item-k-red} $\models \RGpath K_i\phi\lra D_{\delta(\{i\},\grouppath)} \RGpath \phi$;
\item\label{item-d-red} $\models \RGpath D_H\phi\lra D_{\delta(H,\grouppath)} \RGpath \phi$;
%\item $\RGpath E_H\phi\lra $
\item\label{item-up-ag} $(\sim\RELpath)_i=\sim_{\delta(\{i\},\grouppath)}$;
\item\label{item-up-gr} $(\sim\RELpath)_H=\sim_{\delta(H,\grouppath)}$.
\end{enumerate}
\end{proposition}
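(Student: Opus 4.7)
The plan is to reduce parts (1), (2), and (3) to part (4), and then prove (4) by induction on $n$. First, (1) and (2) follow directly from (3) and (4) by unwinding the semantic clauses for $\RGpath$, $K_i$, and $D_H$: for instance, $M,s \models \RGpath D_H \phi$ iff $M\MUpath, s \models D_H\phi$ iff for every $t$ with $s\,(\sim\RELpath)_H\,t$ we have $M\MUpath, t \models \phi$, which by (4) coincides with $M,s \models D_{\delta(H,\grouppath)} \RGpath \phi$; the argument for (1) is identical. Second, (3) is the instance $H = \{i\}$ of (4), using the earlier remark that $(\sim\RELpath)_i = (\sim\RELpath)_{\{i\}}$.

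For (4), I would induct on $n$. The base case $n = 1$ is just the one-step formula recorded at the end of Section \ref{sec:resolving}: $(\sim\rest_{G_1})_H$ is $\sim_H$ if $G_1 \cap H = \emptyset$ and $\sim_{G_1 \cup H}$ otherwise, which matches $\delta(H, G_1)$ by definition. For the inductive step, decompose $\sim\RELpath = \sim'|_{G_n}$ with $\sim' = \sim|_{G_1}\cdots|_{G_{n-1}}$, and apply the one-step formula to $\sim'$ (it holds for any starting family of relations, not just the original $\sim$, since it follows purely from the definition of $|_{G}$): $(\sim\RELpath)_H$ equals $\sim'_H$ if $G_n \cap H = \emptyset$, and $\sim'_{G_n \cup H}$ otherwise. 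The inductive hypothesis applied to the shorter sequence $G_1, \ldots, G_{n-1}$ then rewrites this as $\sim_{\delta(H, G_1, \ldots, G_{n-1})}$ or $\sim_{\delta(G_n \cup H, G_1, \ldots, G_{n-1})}$, respectively.

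It remains to observe that this is precisely $\sim_{\delta(H, \grouppath)}$. The very first step of the recursive definition of $\delta$ produces the set $H$ or $H \cup G_n$ under exactly the same case split, and the remaining steps fold in $G_{n-1}, \ldots, G_1$ from that intermediate set, forming the computation of $\delta$ on the shorter sequence $G_1, \ldots, G_{n-1}$, which is exactly what the inductive hypothesis delivers.

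The proof is mostly bookkeeping, but the one subtle point worth attention is aligning the two directions of processing: the update $\sim\RELpath$ is built from left to right by applying $|_{G_1}, |_{G_2}, \ldots$ in turn, whereas $\delta$ is defined by right-to-left absorption starting at $G_n$. Peeling off the last resolution from the outside is what matches the order in which $\delta$ visits the groups, and this is what makes the inductive hypothesis on $G_1, \ldots, G_{n-1}$ fit on the nose; getting this pairing right is the main thing that could go wrong.
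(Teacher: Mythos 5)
Your proposal is correct and matches the paper's (very terse) proof in substance: both rest on the observation that the one-step identity for $(\sim\rest_{G})_H$ — which is exactly what underlies the reduction axioms RD1/RD2 — unfolds, when iterated from $G_n$ down to $G_1$, into precisely the recursion defining $\delta$, with clauses \ref{item-k-red} and \ref{item-up-ag} obtained as the $H=\{i\}$ special cases. Your explicit induction peeling off the outermost update $|_{G_n}$, together with the remark that the one-step formula applies to any intermediate family of relations, is just the detailed version of what the paper dismisses as ``straightforward.''
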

\begin{proof}
Straightforward: the recursive steps in the definition of the $\delta$ function matches exactly the reduction axioms. Note that clauses \ref{item-k-red} and \ref{item-up-ag} can be treated as special cases of clauses \ref{item-d-red} and \ref{item-up-gr} respectively.
\end{proof}

\section{Axiomatizations}
\label{sec:completeness}

We construct sound and complete axiomatizations of the logics
for the two languages \rd and \rcd.

\subsection{Resolution and Distributed Knowledge}

Consider the language \rd. Let \RD\ be the system defined in Figure
\ref{fig:rd}, where (\SFIVE) and (\DK) are found in Section
\ref{sec:background} and (\RR) stands for the following reduction axioms for resolution:
$$\begin{tabular}{ll@{\quad}ll}
(RA)&$\rg{G} p \leftrightarrow p$ \\
(RC)&$\rg{G}(\phi \wedge \psi) \leftrightarrow \rg{G}\phi \wedge \rg{G}\psi$\\
(RN)&$\rg{G}\neg\phi \leftrightarrow \neg\rg{G}\phi$\\
(RD1)&$\rg{G}D_H\phi \leftrightarrow D_{G \cup H}\rg{G}\phi$, if $G \cap H \neq \emptyset$ \\
(RD2)&$\rg{G}D_H\phi \leftrightarrow D_{H}\rg{G}\phi$, if $G \cap H = \emptyset$.
\end{tabular}$$
Note that (\RR) contains most of the validities in Proposition
\ref{prop:reduction}, except for the reduction principles for
individual knowledge -- they are provable with RD1, RD2 and D1. In
addition, we need the rule N$_R$ for making a reduction to
\SFIVED. With the rule N$_R$ we can easily show that the rule of
\emph{Replacement of Equivalents (RoE)} is admissible in \RD. RoE
allows us to carry out a reduction even without having a reduction axiom for iterated resolution.

\begin{figure}[htbp]
\centering
\fbox{%
\begin{tabular}{@{\ }ll@{\ }}
(\SFIVE)& classical proof system for multi-agent epistemic logic\\
(\DK)	& characterization axioms for distributed knowledge\\
(\RR)	& reduction axioms for resolution\\
(N$_R$) & from $\phi$ infer $R_G\phi$\\
\end{tabular}
}
\caption{Axiomatization \RD.}
\label{fig:rd}
\end{figure}

\begin{theorem}
 Any \rd formula is valid if and only if it is provable in \RD.
\end{theorem}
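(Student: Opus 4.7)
The plan is to split the theorem into the standard two directions. Soundness is essentially bookkeeping: the (\SFIVE) and (\DK) axioms and their rules are sound by the known completeness of \SFIVED for $\eld$ validities (stated in Section \ref{sec:background}), the reduction schemata (RA), (RC), (RN), (RD1), (RD2) are exactly clauses of Proposition \ref{prop:reduction} (and the missing individual-knowledge cases are provable from (RD1), (RD2) and (D1), as remarked in the paper), and the rule (N$_R$) is sound because if $\phi$ holds in every pointed model, then in particular it holds in every $\mfm|_G,s$, so $R_G\phi$ holds in every $\mfm,s$.

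For completeness, I would reduce to \SFIVED via a translation. Define $t:\rd\to\eld$ that eliminates every occurrence of $R_G$, processing innermost resolutions first. Concretely, call a subformula $R_G\psi$ \emph{pure} if $\psi$ contains no resolution operator. For a pure $R_G\psi$, recursively rewrite using the reduction axioms: (RA) handles the atomic case; (RC), (RN) push $R_G$ through the Booleans; (RD1), (RD2) push $R_G$ through each $D_H$ (with the axiom (D1) used to reduce $R_G K_i$ via $R_G D_{\{i\}}$). This process strictly decreases the modal depth of the subformula under $R_G$, so it terminates with a pure $R_G$ in front of an atom, which (RA) removes. For a general formula $\phi$, apply the rewriting to an innermost $R_G\psi$ first (this exists because any maximal-depth resolution operator is pure), and iterate; the total number of resolution operators strictly decreases at each outer iteration, so $t$ is well-defined and $t(\phi)\in\eld$.

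The key provability lemma is $\vdash_{\RD}\phi\leftrightarrow t(\phi)$, proved by induction on the rewriting steps. Each step replaces a subformula by a provably equivalent one (via an (\RR)-axiom), so the argument reduces to showing that the Replacement of Equivalents rule RoE (``from $\vdash\chi\leftrightarrow\chi'$ conclude $\vdash\phi[\chi]\leftrightarrow\phi[\chi']$'') is admissible in \RD. Under the modalities $K_i$, $D_G$ this is standard (use $N$ and $K$, respectively $N$-rule for $D_G$ from distribution and (K$_D$)); under $R_G$ one uses (N$_R$) together with (RC) and (RN) to propagate the biconditional, exactly as the paper points out just after the statement of (\RR).

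Assembling: given a valid $\phi\in\rd$, soundness of \RD\ gives that $t(\phi)$ is also valid (since $\vdash_{\RD}\phi\leftrightarrow t(\phi)$). Because $t(\phi)\in\eld$, completeness of \SFIVED\ yields $\vdash_{\SFIVED}t(\phi)$, hence $\vdash_{\RD}t(\phi)$, and finally $\vdash_{\RD}\phi$ by the biconditional. The main obstacle is really the rewriting-termination bookkeeping: because there is no reduction axiom for $R_GR_H$, one must be careful to pick the innermost resolution at each step and verify that the appropriate complexity measure (number of $R$-occurrences with modal depth of the argument as a tiebreaker) decreases, so that the translation is well-defined and finite; everything else is routine manipulation with known ingredients.
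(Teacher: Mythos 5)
Your proposal is correct and follows essentially the same route the paper intends: the paper gives no explicit proof of this theorem, but the remarks preceding it (that the individual-knowledge reduction principles are derivable from RD1, RD2 and D1, that N$_R$ yields admissibility of Replacement of Equivalents, and the footnote that iterated resolution is handled by reducing ``inside-out'') describe precisely your translation-to-\SFIVED{} argument. The only point worth tightening is the termination measure, since (RC) temporarily duplicates $R_G$-occurrences, so a weighted complexity measure rather than a raw count is the cleaner bookkeeping; otherwise this matches the paper's approach.
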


\subsection{Resolution, Distributed and Common Knowledge}

Consider the language \rcd. Let \RCD\ be the system defined in Figure
\ref{fig:rcd}, which extends \RD\ with (\CK), found in Section
\ref{sec:background}, and an induction rule for resolved common
knowledge (\ruleofC).

\begin{figure}[htbp]
\centering
\fbox{
\begin{tabular}{@{}ll@{}}
(\SFIVE)& classical proof system for multi-agent epistemic logic\\
(\CK)	& axioms and rules for common knowledge\\
(\DK)	& characterization axioms for distributed knowledge\\
(N$_R$)&	from $\phi$ infer $R_G\phi$\\
(\RR)	& reduction axioms for resolution\\
(\ruleofC)& from $\phi\ra(E_H\phi\wedge\RGpath\psi)$ infer\\ & $\phi\ra\RGpath C_H\psi$
\end{tabular}
}
\caption{Axiomatization \RDC}
\label{fig:rcd}
\end{figure}

\subsubsection{Soundness}

For soundness it suffices to show that the rule \ruleofC\ preserves
validity (we know that the other axioms/rules are valid/validity
preserving from soundness results for the logics based on the sublanguages of \rcd).
\begin{lemma}[\ruleofC-validity preservation] For all \rcd\ formulas $\phi$ and $\psi$, all $\grouppath, H\in\gr$, 
if $\models\phi\ra(E_H\phi\wedge\RGpath\psi)$, then $\models\phi\ra\RGpath C_H\psi$.
\end{lemma}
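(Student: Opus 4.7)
The plan is to mimic the standard soundness argument for the induction rule of common knowledge, but carried out inside the updated model $\mfm\MUpath$. The crucial enabling fact is that the updated indistinguishability relations for agents in $H$ are always \emph{refinements} of the original ones, so propagating $\phi$ along an $E_H$-chain in the updated model is no harder than propagating it along the original $E_H$-relation.

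Concretely, fix an arbitrary pointed model $(\mfm,s)$ with $\mfm,s\models\phi$; I want to show $\mfm\MUpath,s\models C_H\psi$, i.e.\ that for every $t$ reachable from $s$ via $\bigl(\bigcup_{i\in H}(\sim\RELpath)_i\bigr)^*$ one has $\mfm\MUpath,t\models\psi$. By Proposition \ref{cor-rd-red}(\ref{item-up-ag}), $(\sim\RELpath)_i=\sim_{\delta(\{i\},\grouppath)}$. An easy unfolding of the definition of $\delta$ shows $i\in\delta(\{i\},\grouppath)$, hence
\[
(\sim\RELpath)_i \;=\; \bigcap_{j\in\delta(\{i\},\grouppath)}\sim_j \;\subseteq\; \sim_i
\qquad\text{for every }i\in H.
\]

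Now I would prove by induction on $k\in\mathbb{N}$ the auxiliary claim: whenever $s\bigl(\bigcup_{i\in H}(\sim\RELpath)_i\bigr)^k t$, then $\mfm,t\models\phi$. The base case $k=0$ is the assumption on $s$. For the step, suppose $\mfm,r\models\phi$ and $r(\sim\RELpath)_i t$ for some $i\in H$. By the refinement inclusion above, $r\sim_i t$, and by the hypothesis of the rule we have $\mfm,r\models E_H\phi$, hence $\mfm,r\models K_i\phi$, so $\mfm,t\models\phi$, completing the induction.

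Once the auxiliary claim is in hand, the conclusion is immediate: for any such $t$, applying the hypothesis once more gives $\mfm,t\models\RGpath\psi$, i.e.\ $\mfm\MUpath,t\models\psi$. Since this holds for every $t$ in the transitive closure of $\bigcup_{i\in H}(\sim\RELpath)_i$ from $s$, we obtain $\mfm\MUpath,s\models C_H\psi$, and therefore $\mfm,s\models\RGpath C_H\psi$ as required. The main obstacle, and essentially the only nontrivial point, is establishing the refinement inclusion $(\sim\RELpath)_i\subseteq\sim_i$ for $i\in H$; everything else is the familiar inductive propagation underlying the standard common-knowledge induction rule, just transplanted into the updated model.
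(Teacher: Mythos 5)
Your proof is correct and follows essentially the same route as the paper's: propagate $\phi$ inductively along an $H$-path in the updated model using the hypothesis $\models\phi\ra E_H\phi$, then apply the hypothesis once more at each node to get $\RGpath\psi$ there. The only difference is that you make explicit the inclusion $(\sim\RELpath)_i\subseteq\sim_i$ for $i\in H$ (via $i\in\delta(\{i\},\grouppath)$), which is indeed the step that licenses using $E_H\phi$ in the \emph{original} model to cross an edge of the \emph{updated} model; the paper's proof uses this silently.
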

\begin{proof}
Suppose $\models \phi\ra(E_H\phi\wedge\RGpath\psi)$. Given a model $\mathfrak{M}$ and a state $s$, suppose $\mathfrak{M},s\models\phi$, we must show that $\mathfrak{M},s\models \RGpath C_H\psi$, i.e., $\mathfrak{M}\MUpath,s\models C_H\psi$. Thus, for all $H$-paths $s_0(\sim\MUpath)_{i_0}\cdots(\sim\MUpath)_{i_{x-1}}s_{x}$, where $s=s_0$, we need to show that $\mathfrak{M}\MUpath,s_x\models\psi$.

From $\models \phi\ra(E_H\phi\wedge\RGpath\psi)$ and $\mfm,s_0\models\phi$ we get $\mathfrak{M},s_0\models(E_H\phi\wedge\RGpath\psi)$, which entails:
$$\mfm,s_1\models\phi\quad\text{and}\quad\mfm\MUpath,s_0\models\psi.$$
From $\mfm,s_1\models\phi$ we get $\mathfrak{M},s_1\models(E_H\phi\wedge\RGpath\psi)$, which entails:
$$\mfm,s_2\models\phi\quad\text{and}\quad\mfm\MUpath,s_1\models\psi.$$
By similar reasoning, for all $y=0,\ldots,x$, we have
$$\mfm,s_y\models\phi\quad\text{and}\quad\mfm\MUpath,s_y\models\psi,$$
which entails $\mfm\MUpath,s_x\models\psi$ as we wish to show.
\end{proof}

\begin{corollary}[Soundness]
  For any \rcd formula $\phi$, if $\phi$ is provable
  in \RCD, then it is valid.
\end{corollary}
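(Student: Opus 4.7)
The plan is to argue by induction on the length of a derivation in \RCD. The base case requires showing that every axiom schema is valid, and the inductive step requires showing that every inference rule preserves validity.

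For the axioms, I would rely on existing results. The (\SFIVE), (\CK), and (\DK) schemata are the standard axiomatizations of S5, common knowledge, and distributed knowledge, whose soundness follows immediately from the semantic clauses recalled in Section \ref{sec:background}. The reduction schemata (\RR) -- that is, (RA), (RC), (RN), (RD1), (RD2) -- are precisely items 1, 2, 3, 6, and 7 of Proposition \ref{prop:reduction}, so their validity has already been established.

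For the rules, (MP), (N), and (N$_C$) are standard. For (N$_R$), the argument is direct: if $\models \phi$, then for any model $\mfm$ and state $s$, since $\mfm|_G$ is again a model, we have $\mfm|_G, s \models \phi$ by the assumption of validity, and hence $\mfm, s \models R_G\phi$ by the semantic clause for the resolution operator. The only non-routine rule is (\ruleofC), which has already been shown to preserve validity by the preceding lemma. The corollary then follows by assembling these observations; since the hard case has been isolated into the lemma, the remaining work is a routine enumeration over the axioms and rules, and no real obstacle remains for this final step.
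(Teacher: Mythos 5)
Your proof is correct and follows essentially the same route as the paper: the paper likewise reduces soundness to checking that \ruleofC\ preserves validity, taking the validity of the remaining axioms and the admissibility of the remaining rules (including N$_R$ and the reduction axioms, which are covered by Proposition \ref{prop:reduction}) as already established for the sublogics. Your explicit check of (N$_R$) and the identification of (\RR) with the relevant items of Proposition \ref{prop:reduction} are accurate and merely spell out what the paper leaves implicit.
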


\subsubsection{Completeness}

As already discussed, \RCD is similar to \PACD (axiomatization for public announcement
logic with common and distributed knowledge; see \cite{WA2013pacd}): both logics extend
epistemic logic with common and distributed knowledge with dynamic
operators with update semantics that remove states. There does not seem,
however, to be a trivial relationship between
the two types of dynamic operators. We are nevertheless able to make
heavy use of the completeness proof of \PACD\ in \cite{WA2013pacd} when
proving completeness of \RCD.  That proof is again based on the
completeness proof for public announcement logic with (only) common knowledge found in
\cite{baltag98logic,vanDitmarsch07del}, extended to deal with the
distributed knowledge operators (which is non-trivial since
intersection is not modally definable). In the following completeness
proof we tweak the \PACD\ proof to deal with resolution operators
instead of public announcement operators. The general proof strategy
is as follows: define a finite canonical pseudo model, where
distributed knowledge operators are taken as primitive, and then
transform it to a proper model while preserving truth. For the last
step we can use a transformation based on \emph{unraveling and
  folding} in \cite{WA2013pacd} directly.

The most important difference to the \PACD\ completeness proof in
\cite{WA2013pacd}, and indeed the crux of the proof, is the use of the
induction rule for resolved common knowledge (\ruleofC). No
corresponding rule is needed in the \PACD completeness proof. The rule
is used in the proof of Lemma \ref{lemma1}(\ref{itemBC}).

\paragraph{Pseudo Semantics}

\begin{definition}[Pre-models\cite{WA2013pacd}]\label{def-pre-model}
A \emph{pre-model} is a tuple $\mfm=\premodelEx$ where:
\begin{itemize}
\item $S$ is a non-empty set of states;
\item $\backsim$ is a function which maps every agent and every non-empty group of agents to an equivalence relation; we write $\backsim_i$ and $\backsim_G$ for $\mathord{\backsim}(i)$ and $\mathord{\backsim}(G)$ respectively;
\item $V: \prop\to\wp(S)$ is a valuation.
\end{itemize}

$\backsim_{C_G}$ is defined as the reflexive transitive closure of $\bigcup_{i\in G}\backsim_i$, just as for a model.
\end{definition}

A pre-model is technically a model with a bigger set of agents (all groups are treated as agents in a pre-model). More precisely, if we make the set of agents $A$ explicit in a pre-model, e.g., $\mfm=(A,S,\backsim,V)$, then $\mfm$ is in fact a ``genuine'' model $(S,\backsim,V)$ where the set of agents is $A\cup(\wp(A)\setminus\emptyset)$.

\begin{definition}[Pseudo models\cite{WA2013pacd}]\label{def-pseudo-model}
A \emph{pseudo model} is a pre-model $\mfm=\premodelEx$ such that for any agent $i$ and any groups $G$ and $H$,
\begin{itemize}
\item $\backsim_{\{i\}}=\backsim_i$, and
\item $G\subseteq H$ implies $\backsim_H\subseteq\backsim_G$.
\end{itemize}
A \emph{pointed pre-model} (resp. \emph{pointed pseudo model}) is a tuple $(\mfm,s)$ consisted of a pre-model (resp. pseudo model) $\mfm$ and a state $s$ in $\mfm$.
\end{definition}

\begin{definition}[Pseudo semantics]\label{def:ps-semantics}
Given a pre-model $\mfm=\premodelEx$, let $m$ be a state in $M$. \emph{Satisfaction at} $(\mfm,s)$ is defined as follows:
$$\begin{array}{lll}
\mfm,s\modelsp p 				&\iff&s\in V(p)\\
\mfm,s\modelsp\neg\phi		&\iff&\mfm,s\not\modelsp\phi\\
\mfm,s\modelsp\phi\wedge\psi	&\iff&\mfm,s\modelsp\phi\ \&\ \mfm,s\modelsp\psi\\
\mfm,s\modelsp K_i\phi		&\iff&(\forall t\in\mfm)(s\backsim_in\Ra\mfm,t\modelsp\phi)\\
\mfm,s\modelsp C_G\phi		&\iff&(\forall t\in\mfm)(s\backsim_{C_G}t\Ra\mfm,t\modelsp\phi)\\
\mfm,s\modelsp D_G\phi		&\iff&(\forall t\in\mfm)(s\backsim_Gt\Ra\mfm,t\modelsp\phi)\\
\mfm,s\modelsp R_G\psi		&\iff&\mfm|_G,s\modelsp\phi,\\
\end{array}$$
where $\mfm|_G=(S,\backsim\rest_G,V)$ such that
$$(\backsim\rest_G)_i=\left\{\begin{array}{@{}l@{\ }l@{}}
\backsim_G,&i\in G\\
\backsim_i,&i\notin G\\
\end{array}\right.
\text{\ \ and\ \ }
(\backsim\rest_G)_H=\left\{\begin{array}{@{}l@{\ }l@{}}
\backsim_{H\cup G},&H\cap G\neq\emptyset\\
\backsim_H,&H\cap G=\emptyset\\
\end{array}\right.
$$

\emph{Satisfaction in a pre-model} $\mfm$ (denoted by $\mfm\modelsp\phi$) is defined as usual. We use $\modelsp\phi$ to denote validity, i.e. $\mfm,s\modelsp\phi$ for any pointed pre-model $(\mfm,s)$. We write $\models$ instead of $\modelsp$ when there is no confusion.
\end{definition}

\begin{proposition}\label{prop:up-pres-ps}
Let $\mfm$ be a pseudo model, $G$ a group of agents. Then $\mfm|_G$ is a pseudo model.
\end{proposition}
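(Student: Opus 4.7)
The plan is to unpack the two defining conditions of a pseudo model from Definition~\ref{def-pseudo-model} and verify each one by a direct case analysis on whether the groups involved intersect $G$. First I would observe that every relation $(\backsim\!\rest_G)_X$ (for $X$ either an agent or a group) is, by the definition in the pseudo semantics, literally equal to $\backsim_Y$ for some nonempty group $Y$ depending on $X$ and $G$. Since $\mfm$ itself is a pseudo (indeed pre-) model, each such $\backsim_Y$ is an equivalence relation, so the relational framework of $\mfm|_G$ is immediately well-defined.

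Next I would verify $(\backsim\!\rest_G)_{\{i\}} = (\backsim\!\rest_G)_i$ by splitting on whether $i\in G$. If $i\in G$, then $\{i\}\cap G = \{i\} \neq \emptyset$, so the group clause gives $(\backsim\!\rest_G)_{\{i\}} = \backsim_{\{i\}\cup G} = \backsim_G$, which matches the agent clause $(\backsim\!\rest_G)_i = \backsim_G$. If $i\notin G$, then $\{i\}\cap G=\emptyset$, so $(\backsim\!\rest_G)_{\{i\}} = \backsim_{\{i\}} = \backsim_i$ (using the first pseudo-model condition on $\mfm$), which again matches $(\backsim\!\rest_G)_i = \backsim_i$.

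For the monotonicity condition, I would fix $H\subseteq H'$ and distinguish cases based on whether $H\cap G$ and $H'\cap G$ are empty. There are three cases, since $H\cap G\neq\emptyset$ forces $H'\cap G\neq\emptyset$: (i) both intersections empty, giving $(\backsim\!\rest_G)_{H'}=\backsim_{H'}\subseteq\backsim_H=(\backsim\!\rest_G)_H$ by $\mfm$'s monotonicity; (ii) $H\cap G=\emptyset$ but $H'\cap G\neq\emptyset$, giving $(\backsim\!\rest_G)_{H'}=\backsim_{H'\cup G}\subseteq\backsim_H=(\backsim\!\rest_G)_H$ since $H\subseteq H'\subseteq H'\cup G$; (iii) both intersections nonempty, giving $(\backsim\!\rest_G)_{H'}=\backsim_{H'\cup G}\subseteq\backsim_{H\cup G}=(\backsim\!\rest_G)_H$ since $H\cup G\subseteq H'\cup G$. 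In each case the inclusion follows from the monotonicity clause applied within $\mfm$.

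There is no real obstacle here; the proposition is essentially a bookkeeping check that the update $\rest_G$ preserves both pseudo-model axioms. The only place one has to be a little careful is in case (ii) of the monotonicity argument, where the relation on the $H'$-side is indexed by the enlarged group $H'\cup G$ while the $H$-side is not enlarged: one must notice that $H\subseteq H'\cup G$ still holds so that $\mfm$'s monotonicity can be invoked.
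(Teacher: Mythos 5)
Your proof is correct and follows essentially the same route as the paper's: the same case split on $i\in G$ for the singleton condition, and the same three-way case analysis on whether $H\cap G$ and $H'\cap G$ are empty for monotonicity, each case reducing to the monotonicity of $\backsim$ in $\mfm$. Nothing is missing.
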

\begin{proof}
See the appendix.
\end{proof}

\begin{proposition}\label{prop:res-red-pseudo}
Propositions \ref{prop:res-red} and \ref{cor-rd-red} still hold for pseudo models.
\end{proposition}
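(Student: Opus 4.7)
The plan is to follow the same argument that the paper sketches for the model case---``straightforward: the recursive steps in the definition of the $\delta$ function matches exactly the reduction axioms''---only now every reference to $\sim$ and to the update operation is read off Definition~\ref{def:ps-semantics} instead of a genuine epistemic model. The lift is almost free because the pseudo-semantic update rule is stated directly in terms of the group-indexed relations $\backsim_G$ and never reconstructs them as intersections of individual relations: for a single group, it reads $(\backsim\rest_G)_H=\backsim_{H\cup G}$ when $H\cap G\neq\emptyset$ and $(\backsim\rest_G)_H=\backsim_H$ otherwise, which is already the base step $\delta_0$ of the $\delta$-recursion written for a single group.

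I would first establish the structural clauses~\ref{item-up-ag} and~\ref{item-up-gr} of Proposition~\ref{cor-rd-red} (which subsume Proposition~\ref{prop:res-red}) by induction on the length $n$ of the sequence $\grouppath$. The base case $n=1$ is exactly the defining clause just quoted. For the inductive step, decompose $\mfm\MUpath=(\mfm|_{G_1}\cdots|_{G_{n-1}})|_{G_n}$, use Proposition~\ref{prop:up-pres-ps} iteratively to ensure that the inner model $\mfm'=\mfm|_{G_1}\cdots|_{G_{n-1}}$ remains a pseudo model, apply the inductive hypothesis to obtain $(\backsim')_H=\backsim_{\delta(H,G_1,\ldots,G_{n-1})}$, and then apply the one-step update rule to $\mfm'$ once more; the case split on whether $G_n$ intersects $\delta(H,G_1,\ldots,G_{n-1})$ mirrors precisely the recursive step of $\delta$. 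Clause~\ref{item-up-ag} is the $H=\{i\}$ instance of~\ref{item-up-gr}, using $\backsim_{\{i\}}=\backsim_i$ from Definition~\ref{def-pseudo-model}, and Proposition~\ref{prop:res-red} is a case-analytic reformulation of the same instance.

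The validity clauses~\ref{item-k-red} and~\ref{item-d-red} then fall out by unfolding the pseudo-semantic definitions: $\mfm,s\modelsp\RGpath D_H\phi$ iff $\mfm\MUpath,s\modelsp D_H\phi$ iff for every $t$ with $s(\backsim\MUpath)_H t$ we have $\mfm\MUpath,t\modelsp\phi$. Substituting $(\backsim\MUpath)_H=\backsim_{\delta(H,\grouppath)}$ from clause~\ref{item-up-gr} turns this into the pseudo-semantic rendering of $\mfm,s\modelsp D_{\delta(H,\grouppath)}\RGpath\phi$, and clause~\ref{item-k-red} is the $H=\{i\}$ specialisation. The argument contains no genuine obstacle; it transfers transparently because the pseudo semantics was designed so that updates work directly with the primitive relations $\backsim_G$. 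The only point that needs care is invoking Proposition~\ref{prop:up-pres-ps} at each inductive step so that the update clause of Definition~\ref{def:ps-semantics} continues to apply to the intermediate models.
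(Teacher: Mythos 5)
The paper itself offers no proof of this proposition (and only a one-sentence justification of Proposition \ref{cor-rd-red}), so your task is to supply the ``straightforward'' induction, and your overall strategy --- induction on the length of \grouppath, the one-step update clause of Definition \ref{def:ps-semantics} as the engine, and Proposition \ref{prop:up-pres-ps} to keep intermediate models pseudo --- is exactly the intended one. There is, however, a genuine slip in your inductive step: you apply the induction hypothesis first, obtaining $(\backsim')_H=\backsim_{\delta(H,G_1,\ldots,G_{n-1})}$ for the inner model $\mfm'=\mfm|_{G_1}\cdots|_{G_{n-1}}$, and only then apply the one-step rule for $|_{G_n}$, splitting on whether $G_n$ meets $\delta(H,G_1,\ldots,G_{n-1})$. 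That is not what the one-step rule says: $(\backsim'\rest_{G_n})_H$ equals $\backsim'_{G_n\cup H}$ or $\backsim'_H$ according to whether $G_n\cap H\neq\emptyset$, and in the non-empty case the relation you need is $\backsim'_{G_n\cup H}$, a different primitive relation of $\mfm'$ from the $\backsim'_H$ your induction hypothesis delivered. Concretely, take $H=\{1\}$, $G_1=\{1,2\}$, $G_2=\{2,3\}$: then $\delta(H,G_1)=\{1,2\}$ meets $G_2$, so your recipe outputs $\backsim_{\{1,2,3\}}$, whereas $G_2\cap H=\emptyset$ gives $(\backsim\rest_{G_1}\rest_{G_2})_{\{1\}}=(\backsim\rest_{G_1})_{\{1\}}=\backsim_{\{1,2\}}=\backsim_{\delta(\{1\},G_1,G_2)}$, which is the correct value.

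The repair is to reverse the order so that the induction tracks the $\delta$-recursion, which consumes the groups from $G_n$ down to $G_1$: first apply the one-step clause to the \emph{outermost} update, splitting on $G_n\cap H$ --- this is precisely the base step $\delta_0$, not the recursive step --- to get $(\backsim\RELpath)_H=\backsim'_{\delta_0}$; then invoke the induction hypothesis for the shorter sequence at the (possibly enlarged) group $\delta_0$, and check that $\delta(\delta_0,G_1,\ldots,G_{n-1})=\delta(H,\grouppath)$, which is immediate from the definition of $\delta$. With that correction the remainder of your argument --- clauses \ref{item-k-red} and \ref{item-d-red} by unfolding $\modelsp$ against clause \ref{item-up-gr}, and the agent clauses as the $H=\{i\}$ instances via $\backsim_{\{i\}}=\backsim_i$ --- goes through as you describe.
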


When we regard a pre-model as a genuine model, classical (individual)
bisimulation becomes an invariance relation. To make this clear, we first elaborate the definition of bisimulation for pre-models, and then introduce its invariance results.

\begin{definition}[Pre-model bisimulation]\label{def:standard-bis}
Let two pre-models $\mfm=(S,\backsim,V)$ and $\mfm'=(S',\backsim',V')$ be given. A non-empty relation $Z\subseteq S\times S'$ is called a \emph{bisimulation} between $\mfm$ and $\mfm'$, denoted by $\mfm\bis\mfm'$, if for all $\tau\in\ag\cup\gr$, all $s\in S$ and $s'\in S'$ such that $sZs'$, the following hold.
\begin{description}
 \item[(\at)] For all $p\in\prop$, $s\in V(p)$ iff $s'\in V'(p)$;
 \item[(zig)] For all $t\in S$, if $s\sim_\tau t$, there is a $t'\in S'$ such that $s'\sim'_\tau t'$ and $tZt'$;
 \item[(zag)] For all $t'\in S'$, if $s'\sim'_\tau t'$, there is a $t\in S$ such that $s\sim_\tau t$ and $tZt'$.
 \end{description}
We say that pointed pre-models $(\mfm,s)$ and $(\mfm',s')$ are
\emph{bisimilar}, denoted $(\mfm,s)\bis(\mfm',s')$, if there is a bisimulation $Z$ between $\mfm$ and $\mfm'$ linking $s$ and $s'$.
\end{definition}

\begin{proposition}\label{prop:bis-inv}
\hspace{-1ex}Resolution preserves pre-model bisimulation. I.e., for all pointed pre-models $(\mfm,s)$ and $(\mfm',s')$, if $(\mfm,s)\bis(\mfm',s')$ then $(\mfm|_G,s)\bis(\mfm'|_G,s')$.
\end{proposition}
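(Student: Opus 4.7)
The plan is to show that the very same relation $Z$ that witnesses $(\mfm,s)\bis(\mfm',s')$ also witnesses $(\mfm|_G,s)\bis(\mfm'|_G,s')$. The key observation driving the proof is that Definition \ref{def:standard-bis} quantifies over all $\tau\in\ag\cup\gr$, so a pre-model bisimulation already accounts for every group relation, and the effect of the update $|_G$ on any $(\backsim\rest_G)_\tau$ is simply to replace it by some $\backsim_{\tau'}$ which already lives in the original pre-model. Thus zig/zag in the updated models collapses to zig/zag for a suitable parameter in the original models.

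Concretely, atomic harmony is immediate because the valuations $V,V'$ are not affected by the update. For the zig clause, assume $sZs'$ and $s\,(\backsim\rest_G)_\tau\,t$ in $\mfm|_G$; I would split into four cases according to the definition of $\backsim\rest_G$:
\begin{enumerate}
\item $\tau=i\in G$: here $(\backsim\rest_G)_i=\backsim_G$, so $s\backsim_G t$, and zig for the parameter $G\in\gr$ in the original bisimulation gives a matching $t'$ with $s'\backsim'_G t'=s'\,(\backsim'\rest_G)_i\,t'$;
\item $\tau=i\notin G$: here $(\backsim\rest_G)_i=\backsim_i$, and zig for the agent $i$ gives the required $t'$;
\item $\tau=H$ with $H\cap G\neq\emptyset$: here $(\backsim\rest_G)_H=\backsim_{H\cup G}$, and zig for $H\cup G\in\gr$ supplies the witness;
\item $\tau=H$ with $H\cap G=\emptyset$: here $(\backsim\rest_G)_H=\backsim_H$, handled by zig for $H$.
\end{enumerate}
In every case the witnessing $t'$ is related to $s'$ by exactly $(\backsim'\rest_G)_\tau$, since $\backsim'\rest_G$ is defined by the same schema on $\mfm'$. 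The zag clause is entirely symmetric.

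The only thing to take care of is to recognise that the quantification over $\tau\in\ag\cup\gr$ in the bisimulation definition is \emph{exactly} what makes the induction trivial: if pre-model bisimulation were defined only for agents, then Case~1 and Case~3 (where the updated agent/group relation is actually a different original \emph{group} relation) would fail, and we would need an argument invoking the diamond/zig-zag properties of intersections of equivalence relations, which need not hold in arbitrary pre-models. So rather than being the main obstacle, the subtlety here is really a definitional one, already absorbed into Definition~\ref{def:standard-bis}, and the proof reduces to the case distinction above plus a symmetric zag argument.
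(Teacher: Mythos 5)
Your proof is correct and follows essentially the same route as the paper's: keep the same relation $Z$, case-split on how $(\backsim\rest_G)_\tau$ rewrites to an original relation $\backsim_{\tau'}$ with $\tau'\in\ag\cup\gr$, and invoke the corresponding zig/zag clause of the given bisimulation. Your closing remark about why the bisimulation must quantify over groups as well as agents is a good observation, but the argument itself matches the paper's.
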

\begin{proof}
See the appendix.
\end{proof}

\begin{corollary}
  \label{cor:premodelinvariance}
  For any pre-models $\mfm$ and $\mfm'$, if $(\mfm,s) \bis (\mfm',s')$
  then $\mfm,s \modelsp \phi$ iff $\mfm',s' \modelsp \phi$ for any \rcd
  formula $\phi$.
\end{corollary}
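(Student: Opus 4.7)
The plan is a straightforward structural induction on the \rcd\ formula $\phi$, exploiting the fact that a pre-model bisimulation $Z$, as defined in Definition \ref{def:standard-bis}, satisfies zig/zag for \emph{every} $\tau \in \ag \cup \gr$, together with Proposition \ref{prop:bis-inv} for the resolution case.

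\textbf{Base and Boolean cases.} If $\phi = p \in \prop$, then $\mfm,s \modelsp p$ iff $s \in V(p)$ iff $s' \in V'(p)$ iff $\mfm',s' \modelsp p$, using clause (\at). The cases $\phi = \neg\psi$ and $\phi = \psi_1 \wedge \psi_2$ follow immediately from the induction hypothesis.

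\textbf{Epistemic cases.} For $\phi = K_i \psi$, I apply zig/zag at $\tau = i$ in the usual way: every $\backsim_i$-successor of $s$ is matched by a $Z$-related $\backsim'_i$-successor of $s'$ (and conversely), and the induction hypothesis on $\psi$ finishes. For $\phi = D_G \psi$ the argument is identical, using zig/zag at $\tau = G$ (which is part of the bisimulation definition since $G \in \gr$). For $\phi = C_G \psi$, I first observe that zig/zag at each $\tau = i$ with $i \in G$ lifts to zig/zag for $\bigcup_{i \in G} \backsim_i$, and then, by a routine induction on path length, to zig/zag for its reflexive transitive closure $\backsim_{C_G}$. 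With this in hand the argument is the same as for $K_i$, and the induction hypothesis on $\psi$ concludes.

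\textbf{Resolution case.} For $\phi = R_G \psi$, the key input is Proposition \ref{prop:bis-inv}: from $(\mfm,s) \bis (\mfm',s')$ we obtain $(\mfm|_G, s) \bis (\mfm'|_G, s')$. By the induction hypothesis applied to $\psi$ (over the smaller formula, not the models), $\mfm|_G, s \modelsp \psi$ iff $\mfm'|_G, s' \modelsp \psi$, which by the semantic clause for $R_G$ is exactly $\mfm, s \modelsp R_G \psi$ iff $\mfm', s' \modelsp R_G \psi$.

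The only step that is not entirely routine is the $C_G$ case, where one has to be careful to reduce zig/zag for the transitive closure to zig/zag on the underlying relations, but this is a standard lemma. The resolution case itself is almost immediate given Proposition \ref{prop:bis-inv}, which is why that proposition was the substantive technical step rather than this corollary.
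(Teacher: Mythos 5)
Your proof is correct and matches the argument the paper intends: the corollary is stated without an explicit proof precisely because it follows by the standard induction you describe (a pre-model bisimulation has zig/zag for every $\tau\in\ag\cup\gr$, which handles $K_i$, $D_G$ and, via closure under unions and paths, $C_G$), with Proposition \ref{prop:bis-inv} supplying the resolution case. Your explicit remark that the induction hypothesis is quantified over all bisimilar pairs of pointed pre-models, so that it can be applied to $(\mfm|_G,s)$ and $(\mfm'|_G,s')$, is exactly the point that makes the $R_G$ case go through.
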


As introduced in \cite{WA2013pacd}, we can also consider a kind of bisimulations between genuine models and pre-models.

\begin{definition}[Trans-bisimulation \cite{WA2013pacd}]\label{def:trans-bis}
Let a model $\mfm=(M,\sim,V)$ and a pre-model $\mfn=(N,\backsim,\nu)$ be given. A non-empty binary relation $Z\subseteq M\times N$ is called a \emph{trans-bisimulation between $\mfm$ and $\mfn$}, if for all $m\in M$ and $n\in N$ with $mZn$:
\begin{description}
\item[(\at)]\ \ $m\in V(p)$ iff $n\in\nu(p)$ for all $p\in\prop$,
\item[(\zigag)]\ \ For all $m'\in M$ and all $i\in\ag$, if $m\sim_im'$ (and so $m\sim_{\{i\}}m'$), then there is an $n'\in N$ such that $m'Zn'$ and $n\backsim_{\tau_0}\cdots\backsim_{\tau_x}n'$ with each of $\tau_0, \ldots, \tau_x$ being ``$i$'' or ``$G$'' such that $i\in G$;
\item[(\ziggr)]\ \ For all $m'\in M$ and all $G\in\gr$ with $|G|\geq2$, if $m\sim_{G}m'$, then there is an $n'\in N$ such that $m'Zn'$ and $n\backsim_{G_1}\cdots\backsim_{G_x}n'$ with $G\subseteq G_1\cap\cdots\cap G_x$;
\item[(zag)]\ \ For all $n'\in N$ and all $\tau\in\ag\cup\gr$, if $n\backsim_\tau n'$, then there is an $m'\in M$ such that $m'Zn'$ and $m\sim_\tau m'$.
\end{description}
We write $Z:(\mfm,m)\tbis(\mfn,n)$ if $Z$ is a trans-bisimulation between $\mfm$ and $\mfn$ linking $m$ and $n$. We say a pointed model $(\mfm,m)$ and a pointed pre-model $(\mfn,n)$ are \emph{trans-bisimilar}, denoted by $(\mfm,m)\tbis(\mfn,n)$, if there is a trans-bisimulation $Z$ such that $Z:(\mfm,m)\tbis(\mfn,n)$.

To make the notation symmetric, we call $Z$ a \emph{trans-bisimulation between $\mfn$ and $\mfm$} if it is a trans-bisimulation between $\mfm$ and $\mfn$, and we regard $Z:(\mfn,n)\tbis(\mfm,m)$ just as $Z:(\mfm,m)\tbis(\mfn,n)$.
\end{definition}

(Pseudo) satisfaction of \rcd formulas is invariant under
trans-bisimulation. We will not prove that directly at this point: it
follows from a stronger result we prove later (Lemma
\ref{lemma:c-inv-tbis}).

\begin{theorem}[Pseudo soundness]\label{pseudo-soundness}
All theorems of \RDC\ are valid in the class of all pseudo models.
\end{theorem}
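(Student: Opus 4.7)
The plan is to verify each axiom and rule of \RDC\ against pseudo semantics, in three groups. First, the \emph{standard modal components}: the \SFIVE\ axioms together with MP and N hold because every $\backsim_i$ in a pre-model is, by Definition \ref{def-pre-model}, an equivalence relation, so the usual normal-modal verification applies unchanged. The \CK\ axioms and N$_C$ hold because $\backsim_{C_G}$ is, just as in a genuine model, defined as the reflexive transitive closure of $\bigcup_{i\in G}\backsim_i$, so the standard fixed-point argument for C1, C2 goes through verbatim. The \DK\ axioms K$_D$, T$_D$, 5$_D$ hold because each $\backsim_G$ is likewise an equivalence relation; D1 and D2 are sound precisely because of the two defining conditions of a pseudo model in Definition \ref{def-pseudo-model}, namely $\backsim_{\{i\}}=\backsim_i$ and $G\subseteq H\Rightarrow\backsim_H\subseteq\backsim_G$.

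Second, the \emph{resolution axioms and} N$_R$. RA is immediate since the valuation $V$ is not affected by the update, and RC, RN follow directly from the compositional pseudo-semantic clauses for $\neg$ and $\wedge$. For RD1 and RD2, we unfold the pseudo-semantic clause for $R_G$ and apply the definition of $(\backsim\rest_G)_H$: in the overlapping case $(\backsim\rest_G)_H=\backsim_{G\cup H}$, yielding RD1; in the disjoint case $(\backsim\rest_G)_H=\backsim_H$, yielding RD2. The rule N$_R$ preserves pseudo-validity because $\mfm|_G$ is again a pseudo model by Proposition \ref{prop:up-pres-ps}, so any formula pseudo-valid at every pointed pseudo model is in particular satisfied at $(\mfm|_G,s)$.

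Third, and this is the main obstacle, the induction rule \ruleofC. Suppose $\modelsp\phi\to(E_H\phi\wedge\RGpath\psi)$ in pseudo semantics and fix a pointed pseudo model $(\mfm,s_0)$ with $\mfm,s_0\modelsp\phi$; we must show $\mfm\MUpath,s_x\modelsp\psi$ for every $H$-path $s_0(\backsim\MUpath)_{i_0}\cdots(\backsim\MUpath)_{i_{x-1}}s_x$ in $\mfm\MUpath$. By iterating Proposition \ref{prop:up-pres-ps}, $\mfm\MUpath$ is itself a pseudo model, and by Proposition \ref{prop:res-red-pseudo} each $(\backsim\MUpath)_{i_k}$ coincides with $\backsim_\delta$ for some $\delta\supseteq\{i_k\}$; combined with the D2-style condition of Definition \ref{def-pseudo-model} this gives the inclusion $(\backsim\MUpath)_{i_k}\subseteq\backsim_{i_k}$. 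The genuine-model argument for \ruleofC-validity preservation then transfers almost verbatim: from $\mfm,s_y\modelsp\phi$ the hypothesis yields $\mfm,s_{y+1}\modelsp\phi$ (using the $E_H\phi$ conjunct with $i_y\in H$ together with the inclusion just noted) and $\mfm\MUpath,s_y\modelsp\psi$ (using the $\RGpath\psi$ conjunct), so by induction on $y$ we obtain $\mfm\MUpath,s_x\modelsp\psi$. Once Proposition \ref{prop:up-pres-ps} and Proposition \ref{prop:res-red-pseudo} are in hand, the compositional form of pseudo semantics makes this carry over without surprises.
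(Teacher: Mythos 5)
Your proof is correct and follows essentially the same route as the paper's: the standard axioms are routine, RD1/RD2 are verified by unfolding the pseudo-semantic clause for $R_G$ against the definition of $(\backsim\rest_G)_H$, and \ruleofC{} is handled by transferring the genuine-model validity-preservation argument. You are in fact somewhat more explicit than the paper about where the pseudo-model conditions of Definition \ref{def-pseudo-model} and Propositions \ref{prop:up-pres-ps} and \ref{prop:res-red-pseudo} are actually needed (for D1/D2, for N$_R$, and for the inclusion $(\backsim\MUpath)_{i_k}\subseteq\backsim_{i_k}$ in the \ruleofC{} case), which is a welcome clarification rather than a deviation.
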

\begin{proof}
See the appendix.
\end{proof}

\paragraph{Finitary Canonical Models}

\begin{definition}[Closure]\label{def-closure}
Given a formula $\phi$, the \emph{closure} of $\phi$ is given by the function $cl:\globallangc\to\wp(\globallangc)$ which is defined as follows:
\begin{enumerate}[leftmargin=1.5em]
\item $\phi\in cl(\phi)$, and if $\psi\in cl(\phi)$, so are all of its subformulas;
\item \label{closure-neg}If $\phi$ is not a negation, then $\phi\in cl(\phi)$ implies $\neg\phi\in cl(\phi)$;
\item \label{closure-kidi} $K_i\psi\in cl(\phi)$ iff $D_{\{i\}}\psi\in cl(\phi)$;
\item \label{closure-cb}$C_G\psi\in cl(\phi)$ implies $\{K_iC_G\psi\ |\ a\in A\}\subseteq cl(\phi)$;
\item \label{closure-rneg} $\RGpath \neg\psi\in cl(\phi)$ implies $\RGpath \psi\in cl(\phi)$;
\item $\RGpath(\psi\wedge\chi)\in cl(\phi)$ implies $\{\RGpath \psi,\RGpath\chi\}\subseteq cl(\phi)$;
\item \label{closure-rkd}$\RGpath K_i\psi\in cl(\phi)$ implies $D_{\delta(\{i\},\grouppath)} \RGpath \psi\in cl(\phi)$;
\item $\RGpath D_H\psi\in cl(\phi)$ implies $D_{\delta(H,\grouppath)} \RGpath \psi\in cl(\phi)$;
\item $\RGpath C_H\psi\in cl(\phi)$ implies all of the following:
	\begin{itemize}[leftmargin=1em]
	\item $D_{\delta(H,\grouppath)}\RGpath C_H\psi\in cl(\phi)$, 
	\item $\{D_{\delta(\{i\},\grouppath)}\RGpath C_H\psi\ |\ i\in H\}\subseteq cl(\phi)$,
	\item $\RGpath\psi\in cl(\phi)$.
	\end{itemize}
\end{enumerate}
It is not hard to verify that the closure of a formula is finite.
\end{definition}

We use $\underline{\Gamma}$ as shorthand for $\bigwedge_{\phi\in\Gamma}\phi$ when $\Gamma$ is a finite set of formulas.

\begin{definition}[Canonical pseudo model]\label{canonical-model}
Let $\alpha$ be a formula. The \emph{canonical pseudo model} $\mfm^c=(S,\backsim,V)$ for $cl(\alpha)$ is defined below:
\begin{itemize}
\item $S=\{\Gamma\ |\ \Gamma\text{ is maximal consistent in }cl(\alpha)\}$;
\item $\Gamma\backsim_i\Delta$ iff $\{K_i\phi\ |\ K_i\phi\in\Gamma\}=\{K_i\phi\ |\ K_i\phi\in\Delta\}$;
\item $\Gamma\backsim_G\Delta$ iff $\{D_H\phi\ |\ D_H\phi\in\Gamma\}=\{D_H\phi\ |\ D_H\phi\in\Delta\}$ whenever $H\subseteq G$;
\item $V(p)=\{\Gamma\in S\ |\ p\in\Gamma\}$.
\end{itemize}
\end{definition}

\begin{proposition}\label{prop:can-ps}
The canonical pseudo model for any $cl(\alpha)$ is a pseudo model.
\end{proposition}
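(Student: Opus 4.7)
The plan is to verify the two conditions of Definition~\ref{def-pseudo-model}, together with the implicit requirement from Definition~\ref{def-pre-model} that each $\backsim_i$ and $\backsim_G$ be an equivalence relation.

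First, I would note that the equivalence-relation property is immediate: each relation $\backsim_\tau$ (for $\tau$ an agent or a group) is defined as equality of certain sets of formulas occurring in $\Gamma$ and $\Delta$. Equality of sets is reflexive, symmetric, and transitive, so nothing further is needed here.

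Second, for the monotonicity condition $G\subseteq H \Rightarrow \backsim_H\subseteq\backsim_G$, I would argue directly from the definition. If $\Gamma\backsim_H\Delta$, then by definition the sets $\{D_{H'}\phi\mid D_{H'}\phi\in\Gamma\}$ and $\{D_{H'}\phi\mid D_{H'}\phi\in\Delta\}$ coincide \emph{for every} $H'\subseteq H$. Since $G\subseteq H$, every $H'\subseteq G$ is also a subset of $H$, so the defining condition for $\backsim_G$ is a weakening of the one for $\backsim_H$; hence $\Gamma\backsim_G\Delta$.

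Third, and the only slightly delicate part, is showing $\backsim_{\{i\}}=\backsim_i$. Unfolding the definitions, $\Gamma\backsim_{\{i\}}\Delta$ holds iff $\{D_{\{i\}}\phi\mid D_{\{i\}}\phi\in\Gamma\}=\{D_{\{i\}}\phi\mid D_{\{i\}}\phi\in\Delta\}$, while $\Gamma\backsim_i\Delta$ holds iff the analogous equality with $K_i$ in place of $D_{\{i\}}$ holds. The bridge is axiom D1, $K_i\psi\lra D_{\{i\}}\psi$, combined with closure clause \ref{closure-kidi}, which guarantees that $K_i\psi\in cl(\alpha)$ iff $D_{\{i\}}\psi\in cl(\alpha)$. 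For any maximal consistent $\Gamma\subseteq cl(\alpha)$, both $K_i\psi$ and $D_{\{i\}}\psi$ lie in the closure together, and D1 forces them to be in $\Gamma$ together. Hence the two equalities of sets are equivalent, giving $\backsim_{\{i\}}=\backsim_i$.

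The only real obstacle is the interaction in the third step between axiom D1 and the closure, and this is trivialised by closure clause \ref{closure-kidi} being explicitly designed for exactly this purpose; the rest of the verification is immediate from the definitions.
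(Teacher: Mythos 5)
Your proof is correct and follows essentially the same route as the paper's: the pre-model conditions and the monotonicity condition $\backsim_H\subseteq\backsim_G$ for $G\subseteq H$ are read off directly from the definition of the canonical relations, and $\backsim_{\{i\}}=\backsim_i$ is obtained by combining axiom D1 with closure clause~\ref{closure-kidi} (plus deductive closure of maximal consistent sets in $cl(\alpha)$), exactly as in the appendix. No substantive differences.
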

\begin{proof}
See the appendix.
\end{proof}

\begin{lemma}\label{lemma0}
Let $\mcs=\{\Gamma\ |\ \Gamma\text{ is maximal consistent in }cl(\alpha)\}$ with $\alpha$ a formula. It holds that $\vdash\bigvee_{\Gamma\in\mcs}\underline{\Gamma}$
and $\vdash\phi\lra\bigvee_{\phi\in\Gamma\in\mcs}\underline{\Gamma}$\, for all $\phi\in cl(\alpha)$.
\end{lemma}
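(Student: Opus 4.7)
My plan is to prove the first claim, $\vdash\bigvee_{\Gamma\in\mcs}\underline{\Gamma}$, and then deduce the second claim as a routine corollary. The key fact underlying both is that every $\Gamma\in\mcs$ \emph{decides} every formula in $cl(\alpha)$: if $\phi\in cl(\alpha)\setminus\Gamma$, then maximality forces $\Gamma\cup\{\phi\}$ to be inconsistent, so $\vdash\underline{\Gamma}\to\neg\phi$.

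For the first claim I would enumerate $cl(\alpha)=\{\psi_1,\dots,\psi_n\}$ and, for each $\Sigma\subseteq cl(\alpha)$, form the characteristic conjunction
\[\Phi_\Sigma\;=\;\underline{\Sigma}\;\wedge\;\bigwedge_{\psi\in cl(\alpha)\setminus\Sigma}\neg\psi.\]
A standard propositional tautology (an instance of PC) gives $\vdash\bigvee_{\Sigma\subseteq cl(\alpha)}\Phi_\Sigma$. From here it suffices to show (i) that whenever $\Phi_\Sigma$ is consistent, $\Sigma$ is already maximal consistent in $cl(\alpha)$, so inconsistent disjuncts can be discharged; and (ii) that for each $\Sigma\in\mcs$, $\vdash\Phi_\Sigma\lra\underline{\Sigma}$. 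Part (ii) is routine: the direction $\Phi_\Sigma\to\underline{\Sigma}$ is immediate, and $\underline{\Sigma}\to\Phi_\Sigma$ is obtained by assembling the derivations $\vdash\underline{\Sigma}\to\neg\psi$ noted above, one for each $\psi\in cl(\alpha)\setminus\Sigma$.

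Part (i) is the main obstacle. Given $\Phi_\Sigma$ consistent and $\psi\in cl(\alpha)\setminus\Sigma$, I need $\Sigma\cup\{\psi\}\vdash\bot$, and here the closure conditions of Definition~\ref{def-closure} must be invoked carefully. If $\psi$ is not a negation, clause~\ref{closure-neg} puts $\neg\psi\in cl(\alpha)$; if additionally $\neg\psi\notin\Sigma$ then $\Phi_\Sigma$ contains both $\neg\psi$ (arising from $\psi\notin\Sigma$) and $\neg\neg\psi$ (arising from $\neg\psi\notin\Sigma$), contradicting consistency, so $\neg\psi$ must already lie in $\Sigma$ and inconsistency of $\Sigma\cup\{\psi\}$ is immediate. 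If $\psi=\neg\chi$ is itself a negation, then $\chi\in cl(\alpha)$ as a subformula, and a symmetric dichotomy on whether $\chi\in\Sigma$ either yields the inconsistent pair $\chi,\neg\chi$ in $\Sigma\cup\{\psi\}$ or forces the inconsistent pair $\neg\chi\wedge\neg\neg\chi$ inside $\Phi_\Sigma$.

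For the second claim I would apply the first to any $\phi\in cl(\alpha)$ to get $\vdash\phi\to\bigvee_{\Gamma\in\mcs}(\phi\wedge\underline{\Gamma})$. For $\Gamma$ with $\phi\in\Gamma$ the disjunct is trivially equivalent to $\underline{\Gamma}$; for $\Gamma$ with $\phi\notin\Gamma$ the identity $\vdash\underline{\Gamma}\to\neg\phi$ makes the disjunct inconsistent and hence provably $\bot$. Collapsing the disjunction yields the forward direction of $\phi\lra\bigvee_{\phi\in\Gamma\in\mcs}\underline{\Gamma}$; the backward direction is immediate since $\phi$ is a conjunct of every surviving $\underline{\Gamma}$.
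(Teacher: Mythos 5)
Your proof is correct and follows essentially the same route as the paper: the second claim is obtained, as in the paper's appendix, by splitting the provable disjunction $\bigvee_{\Gamma\in\mcs}\underline{\Gamma}$ according to whether $\phi\in\Gamma$ and discharging the disjuncts that prove $\neg\phi$. For the first claim the paper simply cites a textbook exercise, and your state-description argument via the conjunctions $\Phi_\Sigma$ (using the closure conditions on single negations and subformulas to show that consistent $\Phi_\Sigma$ determine maximal consistent subsets of $cl(\alpha)$) is precisely the standard proof being referenced, so nothing is missing.
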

\begin{proof}
See \cite[Exercise 7.16]{vanDitmarsch07del} for the first result
(although $cl(\alpha)$ is different in our case the proof is exactly
the same). We give a proof of the second result in the appendix.
\end{proof}

Let $(S,\backsim\RELpath,V)$ be an update of a canonical pseudo model, and $\mfp=\ab{\Phi_0\asymp_{\tau_0}\cdots\asymp_{\tau_{n-1}}\Phi_n}$ where $\asymp$ stands for $\backsim\RELpath$ and every $\tau_x$ is an agent or a group. If all agents in $\tau_0,\ldots,\tau_{n-1}$ appears in $H$, we call $\mfp$ a \emph{$\ab{G_1\cdots G_n}$-resolved $H$-path (from $\Phi_0$)}; if a formula $\phi$ is such that $\phi\in\Phi_i$ for all $0\leq i\leq n$, we call $\mfp$ a \emph{canonical $\phi$-path}. %; and if both, we call $\mfp$ an \emph{$A$-$\phi$-path}.

\begin{lemma}\label{lemma1}% Let $\alpha$ be a formula.
If $\Gamma$ and $\Delta$ are maximal consistent in $cl(\alpha)$, then
\begin{enumerate}[leftmargin=1.5em]
\item\label{item-dc} $\Gamma $ is \emph{deductively closed} in $cl(\alpha)$, i.e., $\Gamma\vdash\phi\Lra\phi\in\Gamma$ for any $\phi\in cl(\alpha)$;

\item\label{item-neg} If $\neg\phi\in cl(\alpha)$, then $\phi\in\Gamma\Lra\neg\phi\notin\Gamma$;

\item\label{item-wedge} If $\phi\wedge\psi\in cl(\alpha)$, then $\phi\wedge\psi\in\Gamma\Lra\phi\in\Gamma\ \&\ \psi\in\Gamma$;

\item\label{itemtt} If $\underline{\Gamma}\wedge\hat K_i\underline{\Delta}$ is consistent, $\Gamma\backsim_i\Delta$; if $\underline{\Gamma}\wedge\hat D_G\underline{\Delta}$ is consistent, $\Gamma\backsim_G\Delta$;

\item If $K_i\phi\in cl(\alpha)$, then $K_i\Gamma\vdash\phi\Lra K_i\Gamma\vdash K_i\phi$;
\item If $D_G\phi\in cl(\alpha)$, then $D_G\Gamma\vdash\phi\Lra D_G\Gamma\vdash D_G\phi$;

\item\label{itemC} If $C_G\phi\in cl(\alpha)$, then $C_G\phi\in\Gamma\Lra\forall\Delta(\Gamma\mathbin{\backsim_{C_G}}\Delta\Ra\phi\in\Delta)$;
\item\label{itemBC} If $\RGpath C_H\phi\in cl(\alpha)$, then $\RGpath C_H\phi\in\Gamma$ iff every $\ab{G_1\cdots G_n}$-resolved $H$-path from $\Gamma$ is a canonical $\RGpath\phi$-path.

\end{enumerate}
\end{lemma}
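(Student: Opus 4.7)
The plan is to work through items 1--8 in order, since later items may invoke earlier ones. Items 1--3 follow from maximal consistency together with closure of $cl(\alpha)$ under negation and subformula-taking, by standard PC reasoning. Item 4 is the familiar S5 accessibility argument: assuming $\underline{\Gamma}\wedge\hat K_i\underline{\Delta}$ is consistent, one uses axiom 4 to transfer each $K_i\theta\in\Gamma$ to $\Delta$ and axiom 5 for the reverse inclusion; the $D_G$-case is parallel using $5_D$ and the derived $4_D$. Items 5 and 6 are the standard deduction arguments (necessitation, K, and axiom 4 for $\Rightarrow$; T for $\Leftarrow$). Item 7 is classical CK completeness: the $\Rightarrow$-direction propagates $C_G\phi$ along $\backsim_{C_G}$-chains using C1; for $\Leftarrow$, take $\chi=\bigvee_{\Delta\in X}\underline{\Delta}$ with $X$ the $\backsim_{C_G}$-ball of $\Gamma$, verify $\vdash\chi\to\phi$ and $\vdash\chi\to E_G\chi$ via item 4 and Lemma~\ref{lemma0}, and apply the CK induction rule derivable from C2 and $N_C$.

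For item 8 $(\Rightarrow)$, I would first derive the propagation principle $\vdash \RGpath C_H\psi \to D_{\delta(\{i\},\grouppath)}\RGpath C_H\psi$ for every $i\in H$: C1 gives $\vdash C_H\psi\to K_iC_H\psi$, D1 rewrites $K_i$ as $D_{\{i\}}$, then $N_R$ combined with $R$-distributivity over $\to$ (a consequence of RC and RN) yields $\vdash\RGpath C_H\psi\to\RGpath D_{\{i\}}C_H\psi$, and the iterated reduction $\vdash\RGpath D_\sigma\phi\lra D_{\delta(\sigma,\grouppath)}\RGpath\phi$ from Proposition~\ref{cor-rd-red} converts this to the desired form. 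Given a resolved $H$-path $\Gamma=\Phi_0\asymp_{\tau_0}\cdots\asymp_{\tau_{m-1}}\Phi_m$, I then show $\RGpath C_H\psi\in\Phi_j$ by induction on $j$: for an agent step $\tau_{j-1}=i\in H$, note that $D_{\delta(\{i\},\grouppath)}\RGpath C_H\psi$ lies in $cl(\alpha)$ by closure item 9 and in $\Phi_{j-1}$ by deductive closure (item 1), and transfers to $\Phi_j$ along $\backsim_{\delta(\{i\},\grouppath)}$ by the definition of the canonical $\backsim$-relation and $T_D$; for a group step $\tau_{j-1}=H'\subseteq H$, pick any $i\in H'$ and observe $\backsim_{\delta(H',\grouppath)}\subseteq\backsim_{\delta(\{i\},\grouppath)}$ by monotonicity of $\delta$ together with the pseudo-model inclusion $G\subseteq G'\Rightarrow\backsim_{G'}\subseteq\backsim_G$, reducing to the agent case. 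Finally $\RGpath\psi\in\Phi_j$ follows from $T_C$ after $R$-distribution.

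For $(\Leftarrow)$, the crux, the plan is to apply \ruleofC\ with $\phi:=D_{\delta(H,\grouppath)}\RGpath\psi$. The premise is verifiable: $\vdash\phi\to\RGpath\psi$ by $T_D$, and $\vdash\phi\to K_i\phi$ for each $i\in H$ follows via the chain $D_\tau\theta\to D_\tau D_\tau\theta$ (using $4_D$), $D_\tau D_\tau\theta\to D_{\{i\}}D_\tau\theta$ (D2, since $i\in H\subseteq\delta(H,\grouppath)=\tau$), and $D_{\{i\}}D_\tau\theta\to K_iD_\tau\theta$ (D1). Hence \ruleofC\ delivers $\vdash D_{\delta(H,\grouppath)}\RGpath\psi\to\RGpath C_H\psi$. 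To close the argument, I would let $X$ be the $\backsim_{\delta(H,\grouppath)}$-equivalence class of $\Gamma$ in the canonical pseudo model and set $\chi=\bigvee_{\Phi\in X}\underline{\Phi}$. The hypothesis applied to the length-one resolved $H$-path $\Gamma\asymp_H\Phi$ (noting $\asymp_H=\backsim_{\delta(H,\grouppath)}$ from Proposition~\ref{cor-rd-red}) gives $\RGpath\psi\in\Phi$ for every $\Phi\in X$, so $\vdash\chi\to\RGpath\psi$. Since $X$ is trivially closed under $\backsim_{\delta(H,\grouppath)}$ by transitivity, the standard argument using item 4 and Lemma~\ref{lemma0} (which identifies $\neg\chi$ with $\bigvee_{\Psi\notin X}\underline{\Psi}$) yields $\vdash\chi\to D_{\delta(H,\grouppath)}\chi$; combining via $N_D$ and $K_D$ with $\vdash\chi\to\RGpath\psi$ gives $\vdash\chi\to D_{\delta(H,\grouppath)}\RGpath\psi$. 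Chaining, $\Gamma\vdash\chi\vdash D_{\delta(H,\grouppath)}\RGpath\psi\vdash\RGpath C_H\psi$, whereupon item 1 delivers $\RGpath C_H\psi\in\Gamma$.

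The main subtlety to anticipate is the choice of $\phi$ in \ruleofC. A more naive attempt using $\phi=\bigvee_{\Phi\in X_0}\underline{\Phi}$ with $X_0$ the set of $\Phi$ satisfying the full path hypothesis would fail to satisfy $\vdash\phi\to E_H\phi$, because $X_0$ is only closed under $\backsim_{\delta(\{i\},\grouppath)}$, a proper subrelation of $\backsim_i$ in general. Taking the $D$-formula at the sufficiently large group $\delta(H,\grouppath)\supseteq H$ sidesteps this obstacle: a single length-one instance of the hypothesis supplies $\Gamma\vdash D_{\delta(H,\grouppath)}\RGpath\psi$, and positive $D$-introspection provides closure under every $K_i$ with $i\in H$ for free, exactly matching the shape of \ruleofC's premise.
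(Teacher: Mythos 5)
Your right-to-left direction of clause (\ref{itemBC}) --- the crux of the whole lemma --- contains a fatal error. You instantiate \ruleofC\ with $\phi:=D_{\delta(H,\grouppath)}\RGpath\psi$ and claim $\vdash\phi\to K_i\phi$ for $i\in H$ via the chain $D_\tau\theta\to D_\tau D_\tau\theta\to D_{\{i\}}D_\tau\theta\to K_iD_\tau\theta$. The middle step applies (D2) backwards: (D2) reads $D_G\chi\to D_{G'}\chi$ for $G\subseteq G'$, so it gives $D_{\{i\}}D_\tau\theta\to D_\tau D_\tau\theta$, not the converse. The principle you actually need, $D_\tau\theta\to K_iD_\tau\theta$ for $i\in\tau$ with $|\tau|\geq 2$, is invalid (a group's distributed knowledge is not individually known by its members), so this cannot be repaired inside the system. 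Worse, the conclusion you want \ruleofC\ to deliver, $\vdash D_{\delta(H,\grouppath)}\RGpath\psi\to\RGpath C_H\psi$, is itself semantically false: with $n=0$ it reads $D_H\psi\to C_H\psi$, refuted in any three-state S5 chain; since \RCD\ is sound, it is not a theorem. The conceptual problem is that $(\backsim\RELpath)_{C_H}$ is the transitive closure of $\bigcup_{i\in H}\backsim_{\delta(\{i\},\grouppath)}$, which is in general much larger than the single relation $\backsim_{\delta(H,\grouppath)}$; your closing argument uses only length-one instances of the path hypothesis and so cannot reach all states relevant to $\RGpath C_H\psi$.

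The paper's proof takes precisely the route you dismiss as ``naive'': it sets $\lambda=\bigvee_{\Delta\in\mcs_0}\underline{\Delta}$, where $\mcs_0$ is the set of maximal consistent sets all of whose $\ab{G_1\cdots G_n}$-resolved $H$-paths are canonical $\RGpath\phi$-paths, establishes $\vdash\lambda\to(E_H\lambda\wedge\RGpath\phi)$ by showing that a consistent $\underline{\Xi}\wedge\hat K_i\neg\lambda$ would yield some $\Theta\notin\mcs_0$ with $\Xi\backsim_i\Theta$, contradicting $\Xi\in\mcs_0$, and then applies \ruleofC\ once to $\lambda$ to get $\vdash\underline{\Gamma}\to\RGpath C_H\phi$. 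You correctly put your finger on the delicate point of that argument --- the premise of \ruleofC\ demands closure under the original $\backsim_i$, which is coarser than the updated relations along which the paths run --- but the response to that difficulty is to argue closure of the entire path-defined set $\mcs_0$, not to shrink $\phi$ to a single $D$-formula, which (as above) makes the target underivable. Your left-to-right direction, by contrast, is essentially the paper's argument (induction along the path using C1, N$_R$, $R$-distribution and the $\delta$-reduction from Proposition \ref{cor-rd-red}), differing only cosmetically in reducing group steps to agent steps via monotonicity of $\delta$; that part is fine.
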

\begin{proof}
We give the proof of the clause \ref{itemBC} in the appendix. Other clauses are the same as in \cite[Lemma 49]{WA2013pacd} which can be traced back to \cite[Chapter 7]{vanDitmarsch07del}.
\end{proof}

\begin{lemma}[Pseudo truth]\label{truth-lemma}
Let $\mfm^c=(S,\backsim,V)$ be the canonical pseudo model for $cl(\alpha)$. For all groups \grouppath, all $\Gamma\in S$,  and all $\RGpath\phi\in cl(\alpha)$, it holds that
$$\RGpath \phi\in\Gamma\quad\iff\quad\mfm^c\MUpath,\Gamma\models\phi.$$
\end{lemma}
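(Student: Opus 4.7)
The plan is to prove the biconditional by induction on the structure of the inner formula $\phi$, with the induction hypothesis quantified uniformly over all resolution prefixes $\RGpath$ (including the empty one) and all maximal consistent $\Gamma$ in $\mfm^c$. The atomic case $\phi=p$ is immediate from iterated application of (RA) and deductive closure (Lemma \ref{lemma1}(\ref{item-dc})), since resolution leaves the valuation unchanged. The Boolean cases $\phi=\neg\psi$ and $\phi=\psi_1\wedge\psi_2$ are handled by pushing the resolution operators through with (RN) and (RC) and then invoking the induction hypothesis.

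For $\phi=K_i\psi$ and $\phi=D_H\psi$, I would use Proposition \ref{cor-rd-red} (clauses \ref{item-k-red} and \ref{item-d-red}) to rewrite $\RGpath K_i\psi$ and $\RGpath D_H\psi$ as $D_\delta\RGpath\psi$, with $\delta=\delta(\{i\},\grouppath)$ or $\delta(H,\grouppath)$. Proposition \ref{prop:res-red-pseudo}, together with clauses \ref{item-up-ag} and \ref{item-up-gr}, shows that in the updated canonical pseudo model both $(\backsim\RELpath)_i$ and $(\backsim\RELpath)_H$ coincide with $\backsim_\delta$. A standard canonical-$D$ argument using Lemma \ref{lemma1} (in particular clause \ref{itemtt}), combined with the induction hypothesis applied to the strictly smaller $\psi$ (still carrying the prefix $\RGpath$), then yields the equivalence; for the existential direction, when $D_\delta\RGpath\psi\notin\Gamma$ we extract a maximal consistent $\Delta$ with $\Gamma\backsim_\delta\Delta$ and $\RGpath\psi\notin\Delta$.

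The crucial case is $\phi=C_H\psi$. By Lemma \ref{lemma1}(\ref{itemBC}), $\RGpath C_H\psi\in\Gamma$ iff every $\langle G_1\cdots G_n\rangle$-resolved $H$-path from $\Gamma$ is a canonical $\RGpath\psi$-path. I would then identify the set of states reachable from $\Gamma$ via such $H$-paths with the $(\backsim\RELpath)_{C_H}$-equivalence class of $\Gamma$: any group step $(\backsim\RELpath)_\tau$ with $\tau\subseteq H$ refines to an individual step $(\backsim\RELpath)_i$ for any $i\in\tau\subseteq H$, by the pseudo-model monotonicity $G\subseteq H\Rightarrow\backsim_H\subseteq\backsim_G$, which is preserved under resolution (Proposition \ref{prop:up-pres-ps}); conversely, each individual step $(\backsim\RELpath)_i$ with $i\in H$ is the singleton-group step $\tau=\{i\}\subseteq H$. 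Applying the induction hypothesis to $\psi$ at each such reachable endpoint then yields $\mfm^c\MUpath,\Gamma\models C_H\psi$.

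The main obstacle is the $C_H$ case: it rests squarely on Lemma \ref{lemma1}(\ref{itemBC}), whose own proof is where the induction rule \ruleofC\ is invoked nontrivially---this rule is the only part of \RDC\ specifically tailored for the interplay between resolution and common knowledge, and without it the $C_H$ step collapses. A secondary, easy-to-overlook point is that the induction hypothesis must be uniform over both the resolution prefix and $\Gamma$, since in every modal case it is applied to $\psi$ still carrying the prefix $\RGpath$, even though that prefix is not syntactically ``smaller'' than the one on the left-hand side.
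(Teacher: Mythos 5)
Your proposal follows essentially the same route as the paper's own proof: the same induction on the inner formula $\phi$ with the hypothesis quantified over all resolution prefixes and all $\Gamma$, the same use of the reduction axioms and of Propositions \ref{cor-rd-red} and \ref{prop:res-red-pseudo} in the $K_i$ and $D_H$ cases, the same extraction of a witness $\Theta$ via Lemma \ref{lemma0} and Lemma \ref{lemma1}(\ref{itemtt}) for the right-to-left direction, and the same reliance on Lemma \ref{lemma1}(\ref{itemBC}) for common knowledge --- where your identification of the resolved $H$-paths with the $(\backsim\RELpath)_{C_H}$-class via pseudo-model monotonicity is in fact spelled out more carefully than in the paper. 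The one induction case you do not enumerate is $\phi=R_H\psi$, which does arise since $R_H$ is a constructor of \rcd; it is, however, discharged immediately by the very uniformity of the induction hypothesis that you emphasize: absorb $R_H$ into the prefix and apply the hypothesis to $\psi$ with the extended prefix $G_1,\ldots,G_n,H$, exactly as the paper does.
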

\begin{proof}
We show this lemma by induction on $\phi$.
\begin{itemize}[leftmargin=1.36em]
\item The base case. $\RGpath p\in\Gamma$ iff $p\in\Gamma$ (Proposition \ref{prop:reduction}(\ref{red-base})) iff $\mfm^c,\Gamma\models p$ iff $\mfm^c,\Gamma\models \RGpath p$ iff $\mfm^c\MUpath,\Gamma\models p$.

\item The case for negation. $\RGpath \neg\psi\in\Gamma$
iff $\neg \RGpath \psi\in\Gamma$ (note that $\neg \RGpath \psi\in cl(\alpha)$ by Definition \ref{def-closure}(\ref{closure-neg},\ref{closure-rneg}))
iff $\RGpath \psi\notin\Gamma$
iff $\mfm^c\MUpath,\Gamma\not\models\psi$
iff $\mfm^c\MUpath,\Gamma\models\neg\psi$.

\item The case for conjunction. $\RGpath (\psi\wedge\chi)\in\Gamma$
iff $(\RGpath \psi\wedge \RGpath \chi)\in\Gamma$\\
iff $\{\RGpath \psi,\RGpath \chi\}\subseteq\Gamma$ ($\RGpath \psi$ and $\RGpath \chi$ are in $cl(\alpha)$)\\
iff $\mfm^c\MUpath,\Gamma\models\psi$ and $\mfm^c\MUpath,\Gamma\models\chi$
iff $\mfm^c\MUpath,\Gamma\models\psi\wedge\chi$.

\item The case for individual knowledge. From left to right.\\
	\begin{tabular}{@{\ }l@{\ \ }l@{\ \ }l@{}}
	 &$\RGpath K_i\psi\in\Gamma$\\
	 iff&\multicolumn{2}{@{}l}{$D_\delta \RGpath \psi\in\Gamma$ where $\delta=\delta(\{i\},\grouppath)$}\\
	 iff&$\forall\Delta.(\Gamma\backsim_\delta\Delta\Ra D_\delta \RGpath \psi\in\Delta)$ &\\
	 $\Ra$&$\forall\Delta.(\Gamma\backsim_\delta\Delta\Ra \RGpath \psi\in\Delta)$ &(T$_D$)\\
	 iff&$\forall\Delta.(\Gamma\backsim_\delta\Delta\Ra\mfm^c\MUpath,\Delta\models\psi)$ &(IH)\\
	 iff&$\forall\Delta.(\Gamma\backsim_\delta\Delta\Ra\mfm^c,\Delta\models \RGpath \psi)$ \\
	 iff&$\mfm^c,\Gamma\models D_\delta \RGpath \psi$\\
	 iff&$\mfm^c,\Gamma\models \RGpath K_i\psi$&(\ref{cor-rd-red}(\ref{item-k-red}), \ref{prop:res-red-pseudo})\\
	 iff&$\mfm^c\MUpath,\Gamma\models K_i\psi$. 
	 \end{tabular}

From right to left. Suppose $\mfm^c\MUpath,\Gamma\models K_i\psi$. We must show $\RGpath K_i\psi\in\Gamma$. Suppose this is not the case. Then $\neg\RGpath K_i\psi\in\Gamma$. Hence $\underline\Gamma\wedge\neg\RGpath K_i\psi$ is consistent, and so is $\underline\Gamma\wedge\hat D_\delta\neg\RGpath\psi$, where $\delta=\delta(\{i\},\grouppath)$. Let $\mcs$ be the set of all maximal consistent sets in $cl(\alpha)$. By Lemma \ref{lemma0}, $\underline\Gamma\wedge\hat D_\delta\bigvee_{\neg\RGpath\psi\in\Theta\in\mcs}\underline{\Theta}$ is consistent. Since conjunction, resolution and the $\hat D_\delta$-operator all distribute over disjunction,  $\bigvee_{\neg\RGpath\psi\in\Theta\in\mcs}(\underline\Gamma\wedge\hat D_\delta\underline{\Theta})$ is consistent. Therefore there must be a $\Theta\in\mcs$ such that $\neg\RGpath\psi\in\Theta$ and $\underline\Gamma\wedge\hat D_\delta\underline{\Theta}$ is consistent.

\quad From $\neg\RGpath\psi\in\Theta$ we get $\RGpath\psi\notin\Theta$. By the induction hypothesis  $\mfm^c\MUpath,\Theta\not\models\psi$, and so $\mfm^c,\Theta\not\models\RGpath\psi$. By Lemma \ref{lemma1}(\ref{itemtt}) and that $\underline\Gamma\wedge\hat D_\delta\underline{\Theta}$ is consistent, $\Gamma\backsim_\delta\Theta$. But this contradicts the supposition that $\mfm^c\MUpath,\Gamma\models K_i\psi$, since by the same reasoning as in the proof of the other direction (see above),  $\mfm^c,\Delta\models \RGpath \psi$ for all $\Delta$ such that $\Gamma\backsim_\delta\Delta$.

\item The case for distributed knowledge: similar to the case for individual knowledge, and in the proof we use $\delta(H,\grouppath)$ instead of $\delta(\{i\},\grouppath)$.

\item The case for common knowledge.
$\RGpath C_H\psi\in\Gamma$\\
iff all $\ab{G_1\cdots G_n}$-resolved $H$-paths from $\Gamma$ are also canonical $\RGpath\psi$-paths.\\
Namely, for all $\Delta$ such that $(\Gamma,\Delta)\in(\backsim\RELpath)_{C_H}$, $\RGpath\psi\in\Delta$\\
iff for all $\Delta$ such that $(\Gamma,\Delta)\in(\backsim\RELpath)_{C_H}$, $\mfm^c\MUpath,\Delta\models\psi$ (by IH)\\
iff $\mfm^c\MUpath,\Gamma\models C_H\psi$.

\item The case for $R_H\psi$.
$\RGpath R_H\psi\in\Gamma$
iff $\mfm^c\MUpath|_H,\Gamma\models\psi$ (IH applies to $\psi$)\\
iff $\mfm^c\MUpath,\Gamma\models R_H\psi$.
\end{itemize}
\end{proof}

\begin{corollary}
Let $\mfm^c=(S,\backsim,V)$ be the canonical pseudo model for $cl(\alpha)$. For all $\Gamma\in S$ and all $\phi\in cl(\alpha)$, it holds that $\phi\in\Gamma$ iff $\mfm^c,\Gamma\models\phi.$
\end{corollary}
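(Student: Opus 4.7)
The plan is to obtain this corollary as the degenerate $n=0$ case of the preceding pseudo truth lemma (Lemma \ref{truth-lemma}). The truth lemma states that for every sequence of groups $G_1, \ldots, G_n$ and every formula $\RGpath \phi \in cl(\alpha)$, $\RGpath \phi \in \Gamma$ iff $\mfm^c\MUpath, \Gamma \models \phi$. Specializing to the empty sequence collapses $\RGpath$ to the identity on formulas and $\mfm^c\MUpath$ to $\mfm^c$, so the equivalence becomes precisely $\phi \in \Gamma$ iff $\mfm^c, \Gamma \models \phi$ for all $\phi \in cl(\alpha)$, which is exactly what the corollary asserts.

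If one prefers to carry out the induction explicitly for the empty-prefix case rather than viewing it as a specialization, the argument still mirrors that of the truth lemma step-for-step. The atomic, negation and conjunction cases are immediate from the definition of $\modelsp$ together with deductive closure (Lemma \ref{lemma1}(\ref{item-dc})--(\ref{item-wedge})). The $K_i$ and $D_H$ cases reduce to Lemma \ref{lemma1}(\ref{itemtt}) together with the canonical-relation definitions, with $\delta$ taking the no-prefix value (its first argument); the $C_H$ case uses Lemma \ref{lemma1}(\ref{itemC}); and for $\phi = \rg{H}\psi$ one simply invokes the truth lemma on $\psi$ with the single-group prefix $H$, giving $\rg{H}\psi \in \Gamma$ iff $\mfm^c|_H, \Gamma \models \psi$, which by the semantics of $\rg{H}$ is exactly $\mfm^c, \Gamma \models \rg{H}\psi$.

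The main (and effectively only) obstacle would have been the resolved-common-knowledge case, which crucially uses the induction rule \ruleofC\ through Lemma \ref{lemma1}(\ref{itemBC}); but that work has already been absorbed into Lemma \ref{truth-lemma}. So I expect the final write-up to be just a one-line deduction: apply the truth lemma with the empty sequence of groups.
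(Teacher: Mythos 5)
Your proposal is correct and matches the paper's intent exactly: the corollary is obtained from the pseudo truth lemma by taking the empty sequence of groups ($n=0$), which is precisely how the paper uses it later in the completeness theorem. No further argument is needed.
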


\begin{lemma}[Pseudo completeness]
Let $\phi$ be an $\globallangc$-formula. 
If $\phi$ is valid on all pseudo models, then it is provable in $\RCD$.
\end{lemma}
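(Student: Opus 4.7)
The plan is to prove pseudo completeness by contraposition. Suppose $\phi$ is not provable in $\RCD$; then $\{\neg\phi\}$ is $\RCD$-consistent, and I will exhibit a pointed pseudo model that falsifies $\phi$.

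Set $\alpha = \neg\phi$. Then $\alpha \in cl(\alpha)$ trivially, so $\neg\phi \in cl(\alpha)$, and $\phi \in cl(\alpha)$ as a subformula (Definition~\ref{def-closure}(1)). This uniform choice sidesteps the mild awkwardness in Definition~\ref{def-closure}(\ref{closure-neg}) when $\phi$ itself happens to be a negation. Let $\mfm^c = (S,\backsim,V)$ be the canonical pseudo model for $cl(\alpha)$ from Definition~\ref{canonical-model}; by Proposition~\ref{prop:can-ps} it is a genuine pseudo model.

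Next, since $cl(\alpha)$ is finite and $\{\neg\phi\}$ is $\RCD$-consistent, a standard Lindenbaum-style argument extends $\{\neg\phi\}$ to a maximal $\RCD$-consistent subset $\Gamma \subseteq cl(\alpha)$: enumerate the formulas of $cl(\alpha)$ and add each one to the running set whenever consistency is preserved. By construction $\Gamma \in S$ and $\neg\phi \in \Gamma$. Applying Lemma~\ref{lemma1}(\ref{item-neg}) with $\neg\phi \in cl(\alpha)$ gives $\phi \notin \Gamma$. Finally, the Corollary immediately following Lemma~\ref{truth-lemma} yields $\phi \in \Gamma \iff \mfm^c,\Gamma \modelsp \phi$, hence $\mfm^c,\Gamma \not\modelsp \phi$, contradicting the assumption that $\phi$ is valid on all pseudo models.

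So the argument itself is just a bookkeeping wrap-up. The real work has already been done upstream: the nontrivial obstacle was the pseudo truth lemma (Lemma~\ref{truth-lemma}), in particular the $\RGpath C_H\psi$ case handled via Lemma~\ref{lemma1}(\ref{itemBC}), which is precisely where the induction rule \ruleofC is consumed. Once the truth lemma is in hand, pseudo completeness follows by the standard Henkin template sketched above.
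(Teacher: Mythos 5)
Your proof is correct and follows exactly the route the paper intends: the paper leaves this lemma without an explicit proof, treating it as the standard Henkin wrap-up of the pseudo truth lemma, which is precisely the contrapositive/Lindenbaum argument you give (and which the paper itself spells out, with $cl(\phi)$ in place of $cl(\neg\phi)$, inside the final Completeness theorem). Your observation that the real work lives in Lemma~\ref{truth-lemma} and Lemma~\ref{lemma1}(\ref{itemBC}) is also accurate.
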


\paragraph{From Pseudo Completeness to Completeness}
\label{trans-proof}

By using \emph{unraveling} and
\emph{folding} from \cite[pp.\,9--15]{WA2013pacd}, we can transform
the canonical pseudo model to a bisimilar pre-model and then to a
trans-bisimilar proper model. It remains to show that this process
preserves truth. We will use $\tbis$ to denote the trans-bisimulation relation.

\begin{lemma}[Invariance of trans-bisimulation]
\label{lemma:c-inv-tbis}
\ \ Let $(\mfm,m)$ be a pointed model, $(\mfn,n)$ a pointed pre-model, and $(\mfs,s)$ a pointed pseudo model. If
$(\mfm,m)\tbis(\mfn,n)\bis(\mfs,s)$, then $\mfm,m\models\phi$ iff $\mfn,n\modelsp\phi$ for all formulas $\phi$.
\end{lemma}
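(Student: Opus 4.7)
The plan is to proceed by structural induction on $\phi$, using Corollary \ref{cor:premodelinvariance} as a black box so that satisfaction of any formula may be freely transported between $(\mfn,n)$ and $(\mfs,s)$ across the standard bisimulation $\bis$. The point of this detour is to let us exploit the pseudo-model inclusions $\backsim^\mfs_G \subseteq \backsim^\mfs_i$ (for $i \in G$) and $\backsim^\mfs_H \subseteq \backsim^\mfs_G$ (for $G \subseteq H$), which hold in $\mfs$ but need not hold in the bare pre-model $\mfn$, and which are needed to compensate for the asymmetric clauses (\zigag) and (\ziggr) in which a single $\sim$-edge is matched by a multi-step $\backsim$-chain.

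The atomic, negation, and conjunction cases are routine from (\at) and the inductive hypothesis. For $K_i\phi$ the $(\Rightarrow)$ direction is a one-step application of (zag). For $(\Leftarrow)$, use (\zigag) to match $m \sim_i m'$ in $\mfm$ with some $n'$ reached from $n$ by a chain $n \backsim_{\tau_0}\cdots\backsim_{\tau_x} n'$ in which every $\tau_j$ is either $i$ or a group containing $i$. Transfer this chain through $\bis$ into $\mfs$; by the pseudo-model inclusion each edge of the image chain is also a $\backsim^\mfs_i$-edge, so transitivity collapses it to a single $\backsim^\mfs_i$-step. Hence $K_i\phi$, transported from $(\mfn,n)$ to $(\mfs,s)$ via Corollary \ref{cor:premodelinvariance}, delivers $\phi$ at the $\mfs$-endpoint; pulling $\phi$ back through $\bis$ and applying the inductive hypothesis yields $\mfm,m'\models\phi$. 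The $D_G\phi$ case is parallel, using (\ziggr) together with $G\subseteq G_j \Rightarrow \backsim^\mfs_{G_j}\subseteq\backsim^\mfs_G$. For $C_G\phi$ we iterate along paths: (zag) lifts a $\backsim_{C_G}$-path in $\mfn$ edge-by-edge to a $\sim_{C_G}$-path in $\mfm$, while repeated application of (\zigag) transports each edge of a $\sim_{C_G}$-path in $\mfm$ to a chain in $\mfn$, which passes through $\bis$ into $\mfs$ and collapses edgewise by pseudo-model inclusions into a genuine $\backsim^\mfs_{C_G}$-path.

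For $R_G\phi$ we reduce via the inductive hypothesis on $\phi$ to the auxiliary fact that trans-bisimulation commutes with resolution, i.e.\ $(\mfm,m) \tbis (\mfn,n)$ implies $(\mfm|_G,m) \tbis (\mfn|_G,n)$. This is a direct verification of the four clauses of Definition \ref{def:trans-bis} after unfolding $(\sim\rest_G)_i$ and $(\backsim\rest_G)_i$ in both the agent and group cases; the analogous statement on the $\bis$ side is Proposition \ref{prop:bis-inv}. The main obstacle I anticipate is the $C_G\phi$ case: since (\zigag) may inflate a single $\sim_i$-edge into a chain of unbounded length, a $\sim_{C_G}$-path of length $\ell$ in $\mfm$ produces a chain in $\mfn$ whose total length cannot be bounded \emph{a priori}. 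Careful bookkeeping is required to argue that after passage to $\mfs$ the concatenation of these chains remains a $\backsim^\mfs_{C_G}$-path, and symmetrically on the (zag) side, without silently appealing to the very invariance statement we are trying to prove.
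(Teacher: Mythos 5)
Your proposal is correct and follows essentially the same route as the paper: induction on $\phi$, handling the resolution case by verifying that the trans-bisimulation $Z$ is preserved under the update (together with Proposition \ref{prop:bis-inv} on the $\bis$ side) and then invoking the induction hypothesis, while the remaining cases are exactly the detour through the pseudo model $\mfs$ via Corollary \ref{cor:premodelinvariance} and the pseudo-model inclusions that the paper delegates to the cited Lemma 26 of the \pacd completeness proof. The obstacle you anticipate in the $C_G$ case is not a real one: each application of (\zigag) produces a chain of finite (if unbounded) length, and a finite concatenation of finite chains, collapsed edgewise in $\mfs$, is still a finite $\backsim^{\mfs}_{C_G}$-path, which is all that the semantics of $C_G$ requires.
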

\begin{proof}
The lemma can be shown by induction on $\phi$. Here we only show the
case for the resolution operators,  proofs of other cases are exactly as in the proof of \cite[Lemma 26]{WA2013pacd}.

Given a pointed model $(\mfm,m)$, a pointed pre-model $(\mfn,n)$ and a pointed pseudo model $(\mfs,s)$, such that $Z:(\mfm,m)\tbis(\mfn,n)$ for some $Z$ and $(\mfn,n)\bis(\mfs,s)$, we have the following:
$$\begin{array}{rcll}
\mfm,m\models R_G\psi&\iff&\mfm|_G,m\models\psi&\\
&\iff&\mfn|_G,n\modelsp\psi&(*)\\
&\iff&\mfn,n\modelsp R_G\psi,\\
\end{array}$$
where to show $(*)$ it is sufficient to show that $Z:(\mfm|_G,m)\tbis(\mfn|_G,n)$, as $(*)$ is then guaranteed by the induction hypothesis (note that $(\mfn|_G,n)\bis(\mfs|_G,s)$ by Proposition \ref{prop:bis-inv}).
Let $\mfm=(M,\sim,V)$ and $\mfn=(N,\backsim,\nu)$.
\begin{itemize}[leftmargin=1.2em]
\item The case for (\at) holds by $Z:(\mfm,m)\tbis(\mfn,n)$.

\item As for (\ziggr), suppose $m(\sim\rest_G)_{H}m'$ for some $m'\in M$ and $|H|\geq 2$.
	\begin{itemize}[leftmargin=.6em]
	\item If $G\cap H=\emptyset$, $(\sim\rest_G)_H=\sim_H$. By $Z:(\mfm,m)\tbis(\mfn,n)$ there is an $n'\in N$ such that $m'Zn'$ and $n\backsim_{H_0}\cdots\backsim_{H_x}n'$ with $H\subseteq H_0\cap\cdots\cap H_x$. Let $H_0=\cdots =H_x=H$. Thus $n\backsim_{H}\cdots\backsim_{H}n'$. Since $(\backsim\rest_G)_H=\backsim_H$, it holds that $n(\backsim\rest_G)_H\cdots(\backsim\rest_G)_Hn'$,  and so (\ziggr) holds in this case.
	
	\item If $G\cap H\neq\emptyset$, $(\sim\rest_G)_H=\sim_{G\cup H}$. By $Z:(\mfm,m)\tbis(\mfn,n)$ there is an $n'\in N$ such that $m'Zn'$ and $n\backsim_{H_0}\cdots\backsim_{H_x}n'$ with $G\cup H\subseteq H_0\cap\cdots\cap H_x$. Thus $n\backsim_{G\cup H}\cdots\backsim_{G\cup H}n'$.  Since $(\backsim\rest_G)_H=\backsim_{G\cup H}$,  It holds that $n(\backsim\rest_G)_H\cdots(\backsim\rest_G)_Hn'$. (\ziggr) holds also in this case.
	\end{itemize}
	
\item The case for (\zigag) is analogous.

\item The case for (zag). For all $n'\in N$ and all $\tau\in\ag\cup\gr$, if $n(\backsim\rest_G)_\tau n'$, then we must show that there is an $m'\in M$ such that $m'Zn'$ and $m(\sim\rest_G)_\tau m'$.
	\begin{itemize}[leftmargin=.6em]
	\item If $\tau$ is an agent $i$. Then if $i\in G$, $(\backsim\rest_G)_i=\backsim_G$, otherwise $(\backsim\rest_G)_i=\backsim_i$.  By $Z:(\mfm,m)\tbis(\mfn,n)$, we have $m\sim_Gm'$ if $i\in G$, or $m\sim_im'$ otherwise. Namely $m(\sim\rest_G)_i m'$ in either case.
	\item If $\tau$ is a group $H$. Then if $G\cap H=\emptyset$,  $(\backsim\rest_G)_H=\backsim_H$, otherwise $(\backsim\rest_G)_H=\backsim_{G\cup H}$.  By $Z:(\mfm,m)\tbis(\mfn,n)$, we have $m\sim_Hm'$ if $G\cap H=\emptyset$, or $m\sim_{G\cup H}m'$ otherwise. Namely $m(\sim\rest_G)_H m'$ in either case.
	\end{itemize}
\end{itemize}
We have shown that the lemma holds for the case for resolution. For other cases we refer to the proof of \cite[Lemma 26]{WA2013pacd}.
\end{proof}

\begin{theorem}[Completeness]
  For any \rcd formula $\phi$, if $\phi$ is valid then it is provable
  in \RCD.
\end{theorem}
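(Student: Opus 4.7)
The plan is the standard contrapositive argument, exploiting the pseudo-completeness lemma together with the trans-bisimulation invariance result (Lemma \ref{lemma:c-inv-tbis}). Suppose $\phi$ is not provable in \RCD. Then $\neg\phi$ is consistent, so it extends to some set maximal consistent in $cl(\neg\phi)$, call it $\Gamma_0$. Form the canonical pseudo model $\mfm^c$ for $cl(\neg\phi)$ (Definition \ref{canonical-model}); by Proposition \ref{prop:can-ps} this is indeed a pseudo model. The Pseudo Truth Lemma (Lemma \ref{truth-lemma}), together with its corollary for the empty sequence of resolution operators, yields $\mfm^c,\Gamma_0\modelsp\neg\phi$, i.e. $\mfm^c,\Gamma_0\not\modelsp\phi$.

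Next, I invoke the unraveling and folding construction from \cite[pp.\,9--15]{WA2013pacd} to convert $\mfm^c$ into a pre-model $\mfn$ bisimilar to $\mfm^c$ (unraveling) and then into a genuine epistemic model $\mfm$ trans-bisimilar to $\mfn$ (folding). Concretely, this produces a pointed model $(\mfm,m)$ and a pointed pre-model $(\mfn,n)$ such that
\[
(\mfm,m)\tbis(\mfn,n)\bis(\mfm^c,\Gamma_0).
\]
These constructions are carried out exactly as in the cited paper; no new work is needed here since they are purely structural operations on the underlying Kripke frame and do not interact with resolution.

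Now apply Lemma \ref{lemma:c-inv-tbis} to the chain $(\mfm,m)\tbis(\mfn,n)\bis(\mfm^c,\Gamma_0)$: truth of any \rcd-formula, in particular of $\phi$, is preserved. Since $\mfm^c,\Gamma_0\not\modelsp\phi$, we conclude $\mfm,m\not\models\phi$ in the genuine pointed model $(\mfm,m)$. Hence $\phi$ is not valid, which is the contrapositive of what we wanted.

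The conceptual heart of the argument has already been done in the preceding machinery, so the theorem is essentially a corollary stitching together: (i) the existence of a consistent extension in $cl(\neg\phi)$, (ii) the Pseudo Truth Lemma, (iii) the unraveling/folding transformation imported from \cite{WA2013pacd}, and (iv) the trans-bisimulation invariance of \rcd. The only step that required genuinely new work for resolution operators was (iv), which was dispatched in Lemma \ref{lemma:c-inv-tbis}; in particular, the subtle point there was showing that the restrictions $(\sim\rest_G)_\tau$ and $(\backsim\rest_G)_\tau$ behave compatibly under the back-and-forth clauses of trans-bisimulation. With that in hand, completeness follows immediately.
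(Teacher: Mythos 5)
Your proposal is correct and follows essentially the same route as the paper: build the canonical pseudo model for the closure, apply the Pseudo Truth Lemma with an empty resolution prefix, then transfer to a genuine model via unraveling and folding together with the invariance results. The only differences are cosmetic — you argue contrapositively from $\neg\phi$ rather than directly from a consistent $\phi$, and the step from the pseudo model to the pre-model formally relies on Corollary \ref{cor:premodelinvariance} (pre-model bisimulation invariance) in addition to Lemma \ref{lemma:c-inv-tbis}, which you should cite explicitly since the latter only concludes the equivalence between the genuine model and the pre-model.
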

\begin{proof}
  It suffices to show that any $\RCD$-consistent formula is
  satisfiable. Let $\phi$ be consistent. Let
  $\mfm^c$ be the canonical pseudo model for $cl(\phi)$. By the
  pseudo truth lemma (with $n=0$, i.e., an empty list of resolution operators), $\phi$ is satisfied in a state
  $\Gamma$ in $\mfm^c$.  Now let $\mfn^{\mfm^c}$ be the \emph{unraveling}
  \cite[Definition 18]{WA2013pacd}\footnote{While unraveling is a standard general
    technique; here we mean unraveling exactly in the sense of the
    mentioned definition.} of $\mfm^c$.  $\mfn^{\mfm^c}$ is a pre-model
  \cite[Proposition 19]{WA2013pacd}.  Now let $(\mfn^{\mfm^c})^*$ be the \emph{folding} \cite[Definition 22]{WA2013pacd} of $\mfn^{\mfm^c}$.  $(\mfn^{\mfm^c})^*$ is a (proper) model
  \cite[Definition 22]{WA2013pacd}.  From \cite[Lemma 27]{WA2013pacd} and
  \cite[Lemma 28]{WA2013pacd} we have that unraveling preserves
  bisimulation and that folding preserves trans-bisimulation, in other
  words we have that\footnote{Here $\overline{\Gamma}$ is any path in
    the unraveling starting with $\Gamma$.} $(\mfm,\Gamma) \bis
  (\mfn^{\mfm^c},\overline{\Gamma}) \tbis
  ((\mfn^{\mfm^c})^*,\overline{\Gamma})$.  By
  Corollary \ref{cor:premodelinvariance},
  $(\mfn^{\mfm^c},\overline{\Gamma}) \modelsp \phi$.  By Lemma
  \ref{lemma:c-inv-tbis}, $((\mfn^{\mfm^c})^*,\overline{\Gamma}) \models
  \phi$ and we are done.
\end{proof}

\section{Discussion}
\label{sec:conclusions}

In this paper we captured the dynamics of publicly observable private
resolution of distributed knowledge. Resolution operators (using
update semantics) are both an alternative and a complement to the
standard distributed knowledge operators (which use standard modal
semantics).

Resolution operators let us reason about the relationship between
common knowledge and distributed knowledge in general, and in particular about
distributed knowledge as potential common knowledge -- when can distributed knowledge become common knowledge?
A naive idea would be that $D_G \phi$ should imply that $R_GC_G\phi$
-- any information that is distributed can become common
knowledge through resolution. This does not hold in general, however,
due to Moore-like phenomena -- $\phi$ might even become false after
resolution (an example is the formula $D_{\{1,2\}}(p \wedge \neg
K_1p)$ discussed in the introduction).  We do, however, have the
following (Prop. \ref{prop:red-ck}(\ref{RC-subset}) with $G=H$):
\[R_GC_G \phi \leftrightarrow D_GR_G\phi.\] A fact can become common
knowledge after the group have shared their information if and only if
it was distributed knowledge before the event that the fact would be
true after the event. This is exactly the distributed knowledge that
can become common knowledge (in our special case of publicly
observable private resolution of distributed knowledge). If the grand
coalition resolves its distributed knowledge, there is no distinction
between distributed and common knowledge any more:
$R_{\ag}C_{\ag}\phi\lra R_{\ag}D_{\ag}\phi$
(Prop. \ref{red-grand-coalition}).

As discussed in the introduction, it has been argued that distributed
knowledge in general does not comply with the following \emph{principle
  of full communication} \cite{van1999group}: if $D_G \phi$ is true,
then $\phi$ follows logically from the set of all formulas known by at
least one agent in the group. This is seen as a problem: namely that
agents can have distributed knowledge without being able to establish
it ``through communication'' \cite{van1999group}.  Several papers
\cite{van1999group,gerbrandy:1999,roelofsen06dk} have tried to
characterize classes of models on which the principle of full
communication \emph{does} hold -- the class of all such models is
called \emph{full communication models} \cite{roelofsen06dk}.  This may
seem related to the distinction between distributed knowledge and
resolution operators: the latter is intuitively related to internal
``full'' communication in the group. However, this similarity is
superficial: the notion of full communication in the sense of
\cite{van1999group} is about \emph{expressive power of the
  communication language} and the limits that puts on the resulting
possible epistemic states under certain assumptions about how
information is shared. The key point of the resolution operators, on
the other hand, compared with the standard distributed knowledge
operators, is to make a distinction between \emph{before} and
\emph{after} the information sharing event. That distinction is not made in
standard distributed knowledge -- even restricted to full communication
models: it is easy to see that, e.g., $D_{\{1,2\}} (p \wedge \neg K_1
p)$ is satisfiable also on full communication models. The two ideas,
of limiting models to full communication models and of modeling group
information sharing events using model updates, are orthogonal, and there is
nothing against restricting logics with the resolution operators to
full communication models.  We leave that for future work.
Furthermore, it would be interesting to look at a combined variant:
``update by full communication'', which takes the communication
language into account when defining the updated model.

A main interest for future work is expressive power. Can it
be shown that $\globallangc$ is strictly more expressive than
$\langdc$?  Another, related, natural question is the relative
expressivity of $\rcd$ and $\pacd$: can the combination of public
announcement operators (which eliminate states) and distributed
knowledge operators (which pick out states considered possible by
everyone) always be used to ``simulate'' the resolution operators
(which eliminate states considered possible by everyone)?

Also of interest for future work is to look at other assumptions about
the other agents' knowledge about the group communication event taking
place. In this paper we only studied the case that it is common
knowledge that the event takes place (but not what the agents in the group
learn). That was naturally modeled using a ``global'' model update:
in every state, replace accessibility for each agent in the group with
the group accessibility (intersection). An interesting and also
natural alternative is doing only a ``local'' model update: change
accessibility in the same way, but only in the current state. That
would correspond to it being common knowledge that if this is the
current state, then the group resolves their knowledge.

When looking at the interaction of the resolution and common knowledge
operators one might be reminded of \emph{relativized common knowledge}
\cite{vb1999relativization,vb06logics}.  Here is an open question: can
$R_GC_H\phi$ be expressed using relativized common knowledge, in
combination with other operators?

Finally, there is a conceptual relationship to \emph{group
  announcement logic} \cite{jal}, where formulas of the form $\langle
G\rangle\phi$ say that $G$ can make a joint public announcement such
that $\phi$ will become true. A difference to the resolution operators
in this paper is that latter model private communication. Yet, the
exact relationship between these operators is interesting for future
work.

\section{Acknowledgments}

Y{\`i} N. W\'ang acknowledges funding support from the Scientific Research Foundation for the Returned Overseas Chinese Scholars, State Education Ministry of P.R.C.

\bibliographystyle{eptcs}
\bibliography{resolving}

\appendix

\section{Some proofs}
\paragraph{Proof of Proposition \ref{prop:red-ck}}
\ref{RC-no-intersection}.
  $\mfm,s \models \rg{G}C_H \phi$ iff $\mfm|_G,s \models C_H\phi$ iff $\mfm|_G,t \models \phi$ for any $(s,t) \in \sim^{*'}_H$, where $\sim^{*'}_H = (\bigcup_{i \in H} \sim_i')^*$ and $\sim_i' = \bigcap_{j \in G} \sim j$ for $i \in G$ and $\sim_i' = \sim_i$ for $i \not\in G$. Thus, when $G \cap H = \emptyset$, we get that $\sim^{*'}_H = (\bigcup_{i \in H} \sim_i)^*$. $\mfm|_G,t \models \phi$ for any $(s,t) \in (\bigcup_{i \in H} \sim_i)^*$ holds iff $\mfm,t \models \rg{G}\phi$ for any $(s,t) \in (\bigcup_{i \in H} \sim_i)^*$ iff $\mfm,t \models C_H\rg{G}\phi$.
  
\ref{RC-subset}.
$$\begin{array}{ll}
& \mfm,s\models R_GC_H\phi\\
\iff&	\mfm|_G,s\models C_H\phi\\
\iff&	\mfm|_G,t\models \phi \text{ for all $t$ s.t. }(s,t)\in\sim^G_{C_H}\\[1ex]
\iff\ ^\dag&	\mfm|_G,t\models \phi \text{ for all $t$ s.t. }(s,t)\in\sim^G_{i}\\
\iff&	\mfm|_G,s\models K_i\phi\\
\iff&	\mfm,s\models R_GK_i\phi\\
\end{array}$$

For the $\dag$ step, note that when $i\in G$, $\sim^G_i=\sim^G_j$ for any $j\in G$ (and actually also equal to $\sim_G$). Therefore,
$$\sim^G_{C_H}=(\bigcup_{i\in H}\sim^G_i)^*=(\sim^G_i)^*=\sim^G_i.$$

That $R_GK_i\phi\lra D_GR_G\phi$ is valid is already shown in Proposition \ref{prop:reduction}.

\paragraph{Proof of Proposition \ref{prop:up-pres-ps}}
Let $\mfm=\premodelEx$. Clearly $\mfm|_G=(S,\backsim\rest_G,V)$ is a pre-model. Moreover,
\begin{enumerate}
\item Given an agent $i$, 
$$\begin{array}{lll}
(\backsim\rest_G)_{\{i\}}&=&
	\left\{\begin{array}{ll}
	\backsim_{\{i\}\cup G}&i\in G\\
	\backsim_{\{i\}},& i\notin G
	\end{array}\right.\\[1em]
&=&\left\{\begin{array}{ll}
\backsim_G&i\in G\\
\backsim_i,& i\notin G
\end{array}\right.\\[1em]
&=&(\backsim\rest_G)_i.
\end{array}$$

\item Given two groups $H$ and $H'$ such that $H\subseteq H'$,
$$\begin{array}{lll}
(\backsim\rest_G)_{H'}&=&
	\left\{\begin{array}{ll}
	\backsim_{H'\cup G}&H'\cap G\neq\emptyset\\
	\backsim_{H'},& H'\cap G=\emptyset\\
	\end{array}\right.\\[1em]
(\backsim\rest_G)_{H}&=&
	\left\{\begin{array}{ll}
	\backsim_{H\cup G}&H\cap G\neq\emptyset\\
	\backsim_{H},& H\cap G=\emptyset\\
	\end{array}\right.	
\end{array}$$
So we have:
	\begin{itemize}[leftmargin=.5em]
	\item when $H\cap G\neq\emptyset$ (and therefore $H'\cap G\neq\emptyset$), $(\backsim\rest_G)_{H'}\ =\ \backsim_{H'\cup G}\ \subseteq\ \backsim_{H\cup G}\ =\ (\backsim\rest_G)_H$;
	\item when $H'\cap G=\emptyset$ (and therefore $H\cap G=\emptyset$), $(\backsim\rest_G)_{H'}=\backsim_{H'}\subseteq\backsim_H=(\backsim\rest_G)_H$;
	\item otherwise $H'\cap G\neq\emptyset$ and $H\cap G=\emptyset$, and in this case $(\backsim\rest_G)_{H'}\ =\ \backsim_{H'\cup G}\ \subseteq\ \backsim_H\ =\ (\backsim\rest_G)_H$.
	\end{itemize}
\end{enumerate}
$\mfm|_G$ is a pre-model satisfying the two conditions above, which shows it is a pseudo model.

\paragraph{Proof of Proposition \ref{prop:bis-inv}}
Let $\mfm=(S,\backsim,V)$ and $\mfm'=(S',\backsim',V')$. Thus $\mfm|_G=(S,\backsim\rest_G,V)$ and $\mfm'|_G=(S',\backsim'\rest_G,V')$. Suppose $Z:(\mfm,s)\bis(\mfm',s')$, and we show $Z:(\mfm|_G,s)\bis(\mfm'|_G,s')$:
\begin{itemize}
\item[(\at)] This clearly follows from the (\at) clause of $Z:(\mfm,s)\bis(\mfm',s')$.
\item[(zig)] For all $t\in S$, if $s(\backsim\rest_G)_Ht$, then 
	\begin{itemize}[leftmargin=0em]
	\item If $G\cap H=\emptyset$, then $(\backsim\rest_G)_H=\backsim_H$ and $(\backsim'\rest_G)_H=\backsim'_H$. By $Z:(\mfm,s)\bis(\mfm',s')$ there must be a $t'\in S'$ such that $s'(\backsim'\rest_G)_Ht'$ and $tZt'$.
	\item If $G\cap H\neq\emptyset$, then $(\backsim\rest_G)_H=\backsim_{G\cup H}$ and $(\backsim'\rest_G)_H=\backsim'_{G\cup H}$. By $Z:(\mfm,s)\bis(\mfm',s')$ there must be a $t'\in S'$ such that $s'(\backsim'\rest_G)_Ht'$ and $tZt'$.
	\end{itemize}
If $s(\backsim\rest_G)_it$, we can prove analogously to the above.

\item[(zag)] This can be shown analogously to the case for (zig).
\end{itemize}

\paragraph{Proof of Theorem \ref{pseudo-soundness}}
It is easy to verify that (\SFIVE), (\CK), (\DK), (N$_R$), (RA), (RC) and (RN) are all valid or admissible with respect to the class of all pseudo models. Here we only show that i) (RD1) and (RD2) are valid in all pseudo models, and ii) (RR$_C$) preserves validity of pseudo models.

Let $\mfm=(S,\backsim,V)$ be a pseudo model and $s\in S$. We show the following:
\begin{itemize}[leftmargin=1.2em]
\item $\mfm,s\modelsp\text{RD1}$ and $\mfm,s\modelsp\text{RD2}$, i.e.,
	\begin{itemize}[leftmargin=1em]
	\item If $G \cap H \neq \emptyset$, then $\mfm,s\modelsp\rg{G}D_H\phi \leftrightarrow D_{G \cup H}\rg{G}\phi$;
	\item If $G \cap H = \emptyset$, then $\mfm,s\modelsp\rg{G}D_H\phi \leftrightarrow D_{H}\rg{G}\phi$.
	\end{itemize}
$$\begin{array}{@{}l@{\ }l@{}}
&\mfm,s\modelsp R_GD_H\phi\\
\iff&\mfm|_G,s\modelsp D_H\phi\\
\iff&\mfm|_G,t\modelsp \phi\text{ for all $t$ s.t. $(s,t)\in(\backsim\rest_G)_H$}\\
\iff&\mfm,t\modelsp R_G\phi\text{ for all $t$ s.t. $(s,t)\in(\backsim\rest_G)_H$}\\
\iff\,^\dag&\text{if }G\cap H\neq\emptyset,\ \mfm,t\modelsp R_G\phi\text{ for all $t$ s.t. $(s,t)\in\backsim\rest_{G\cup H}$,}\\
&\text{if }G\cap H=\emptyset,\ \mfm,t\modelsp R_G\phi\text{ for all $t$ s.t. $(s,t)\in\backsim\rest_{H}$}\\
\iff&\text{if }G\cap H\neq\emptyset,\ \mfm,t\modelsp D_{G\cup H}R_G\phi,\text{ and}\\
&\text{if }G\cap H=\emptyset,\ \mfm,t\modelsp D_HR_G\phi,\\
\end{array}$$
where the $^\dag$ step is by definition:
$$(\backsim\rest_G)_H=\left\{
\begin{array}{ll}
\backsim_{H\cup G},&G\cap H\neq\emptyset,\\
\backsim_H,&G\cap H=\emptyset.\\
\end{array}\right.$$

\item $\mfm,s\modelsp\phi\ra\RGpath C_H\psi$ under the assumption $\modelsp \phi\ra(E_H\phi\wedge\RGpath \psi)$. The proof is similar to the proof for genuine models.
\end{itemize}

\paragraph{Proof of Proposition \ref{prop:can-ps}}
Suppose that $\mfm=(S,\backsim,V)$ is the canonical pseudo model for $cl(\alpha)$.
We need to show that $\mfm$ is a pseudo model. Namely,
\begin{enumerate}
\item $S$ is non-empty, and
\item all $\backsim_i$'s and $\backsim_G$'s are equivalence relations, and
\item $V$ is a valuation from $\prop$ to $\wp(S)$, and
\item $\backsim_i=\backsim_{\{i\}}$ for every agent $i$, and
\item $\backsim_H\subseteq \backsim_G$ if $G$ and $H$ are groups such that $G\subseteq H$.
\end{enumerate}
Conditions 1--3 are the conditions for being a pre-model which are easy to verify. Conditions 4 and 5 are additional conditions for being a pseudo model.

By Definition \ref{def-closure}(\ref{closure-kidi}), $K_i\phi$ and $D_i\phi$ must be in $cl(\alpha)$ both or neither. Thus, for any $\Gamma,\Delta\in S$,
$$\begin{array}{lll}
&\Gamma \backsim_i\Delta\\
\iff&\{K_i\phi\ |\ K_i\phi\in\Gamma\}=\{K_i\phi\ |\ K_i\phi\in\Delta\}&\\
\iff&\{D_i\phi\ |\ D_i\phi\in\Gamma\}=\{D_i\phi\ |\ D_i\phi\in\Delta\}&\text{(Axiom DK1)}\\
\iff&\Gamma\backsim_{\{i\}}\Delta.&\\
\end{array}$$
$$\begin{array}{lll}
&\Gamma\backsim_H\Delta\\
\iff&\{D_{H'}\phi\ |\ D_{H'}\phi\in\Gamma\}=\{D_{H'}\phi\ |\ D_{H'}\phi\in\Delta\},\\
&\text{ for any group }H'\subseteq H\\
\Ra&\{D_{G'}\phi\ |\ D_{G'}\phi\in\Gamma\}=\{D_{G'}\phi\ |\ D_{G'}\phi\in\Delta\},\\
&\text{ for any group }G'\subseteq G\\
\iff&\Gamma\backsim_{G}\Delta.\\
\end{array}$$
This finishes the proof, and shows that the notion ``canonical pseudo model'' is well-defined.

\paragraph{Proof of Lemma \ref{lemma0}(2)}
Let $\phi\in cl(\alpha)$. By $\vdash(\bigvee_{\neg\phi\in\Gamma\in\mcs}\underline\Gamma)\ra\neg\phi$ and the first result of this lemma (i.e., $\vdash\bigvee_{\neg\phi\in\Gamma\in\mcs}\vee\bigvee_{\phi\in\Gamma\in\mcs}$) we get $\vdash\phi\ra\bigvee_{\phi\in\Gamma\in\mcs}\underline{\Gamma}$. For the converse direction, suppose $\nvdash\bigvee_{\phi\in\Gamma\in\mcs}\underline{\Gamma}\ra\phi$. Then $\neg(\bigvee_{\phi\in\Gamma\in\mcs}\underline{\Gamma}\ra\phi)$ is consistent. Namely $\neg\phi\wedge\bigvee_{\phi\in\Gamma\in\mcs}\underline{\Gamma}$ is consistent. But this is impossible.

\paragraph{Proof of Lemma \ref{lemma1}(\ref{itemBC})}
%\ref{itemBC}. 
Let $\RGpath C_H\phi\in cl(\alpha)$. It follows from the definition of closure (Definition \ref{def-closure}) that the following formulas:
\begin{itemize}
\item $D_{\delta(\{i\},\grouppath)} \RGpath C_H\phi$ where $i\in H$
\item $D_{\delta(H,\grouppath)} \RGpath C_H\phi$
\item $\RGpath\phi$ and $\neg\RGpath\phi$
\end{itemize}
are all in $cl(\alpha)$.
 
From left to right. Suppose $\RGpath C_H\phi\in\Gamma$, we continue by induction on the length of the path that every $\ab{G_1\cdots G_n}$-resolved $H$-path from $\Gamma$ is a canonical $\RGpath C_H\phi$-path. Then the left-to-right direction follows: by $\vdash C_H\phi\ra\phi$, N$_R$ and $R_G$-distribution (which follows from \RR\ axioms) we get $\vdash \RGpath C_H\phi\ra \RGpath\phi$, and by $\RGpath\phi\in cl(\alpha)$ we have $\RGpath\phi\in\Gamma$.

Suppose the length of the $\ab{G_1\cdots G_n}$-resolved $H$-path is 0, i.e., the path is $\ab{\Gamma}$, we must show that $\RGpath C_H\phi\in\Gamma$. This is guaranteed by the supposition. 

Suppose the length of the $\ab{G_1\cdots G_n}$-resolved $H$-path is $n+1$, i.e., the path is $\ab{\Gamma_0\asymp_{\tau_0}\cdots\asymp_{\tau_{n-1}}\Gamma_n\asymp_{\tau_n}\Gamma_{n+1}}$ with $\Gamma_0=\Gamma$ and every $\tau_x$ is either in $H$ or a subset of $H$. By the induction hypothesis we may assume that $\RGpath C_H\phi\in\Gamma_n$.
	\begin{itemize}[leftmargin=.9em]
	\item Suppose $\tau_n$ is an agent $i$ ($i\in H$). By Axiom C1 we have $\vdash C_H\phi\ra K_iC_H\phi$. It follows that $\vdash \RGpath C_H\phi\ra \RGpath K_iC_H\phi$ by the rules N$_R$ and $R_G$-distribution. Let $\delta=\delta(\{i\},\grouppath)$. By the reduction axioms we move $K_i$ left, i.e., $\vdash \RGpath K_iC_H\phi\ra D_\delta \RGpath C_H\phi$, so we get $\vdash \RGpath C_H\phi\ra D_\delta \RGpath C_H\phi$. Hence $\Gamma_n\vdash D_\delta \RGpath C_H\phi$. As $D_\delta \RGpath C_H\phi\in cl(\alpha)$, we have $D_\delta \RGpath C_H\phi\in\Gamma_n$.
	Moreover, by Proposition \ref{prop:res-red-pseudo}, $\asymp_i=\backsim_\delta$. Thus $D_\delta \RGpath C_H\phi\in\Gamma_{n+1}$ by the definition of $\backsim_\delta$, and so $\RGpath C_H\phi\in\Gamma_{n+1}$.
	
	\item Suppose $\tau_n$ is a group $I$ ($I\subseteq H$). By Axioms C1, D1 and D2 we have $\vdash C_H\phi\ra D_IC_H\phi$. By N$_R$ and $R_G$-distribution, $\vdash \RGpath C_H\phi\ra \RGpath D_IC_H\phi$. By similar reasoning to the case above, we get the result $\RGpath C_H\phi\in\Gamma_{n+1}$ (we use $\delta(H,\grouppath)$ instead of $\delta(\{i\},\grouppath)$ in this case).
	\end{itemize}
In both cases we get $\RGpath C_H\phi\in\Gamma_{n+1}$ as we wish to show.

From right to left. Suppose that every $\ab{G_1\cdots G_n}$-resolved $H$-path from $\Gamma$ is a canonical $\RGpath\phi$-path. Let $\mcs_0$ be the set of all maximal consistent sets $\Delta$ in $cl(\alpha)$ such that every $\ab{G_1\cdots G_n}$-resolved $H$-path from $\Delta$ is a canonical $\RGpath\phi$-path. Now consider the formula
$$\lambda=\bigvee_{\Delta\in\mcs_0}\underline{\Delta}$$
We will show the following:
\begin{enumerate}
\item $\vdash\underline{\Gamma}\ra \lambda$
\item $\vdash \lambda\ra(E_H\lambda\wedge\RGpath\phi)$.
\end{enumerate}
From the above and the reduction rule for resolved common knowledge we get $\vdash\underline{\Gamma}\ra \RGpath C_H\phi$ which furthermore entails $\RGpath C_H\phi\in\Gamma$. We now continue with the proof of the two clauses.
\begin{enumerate}[wide]
\item This is trivial, as $\underline{\Gamma}$ is one of the disjuncts of $\lambda$.

\item Suppose towards a contradiction that $$\lambda\wedge\neg(E_H\lambda\wedge\RGpath\phi)$$ is consistent, i.e., $\lambda\wedge(\neg E_H\lambda\vee\neg\RGpath\phi)$ is consistent. Because $\lambda$ is a disjunction there must be a disjunct $\underline{\Xi}$ of $\lambda$ such that $\underline{\Xi}\wedge(\neg E_H\lambda\vee\neg\RGpath\phi)$ is consistent. It follows that either $\underline{\Xi}\wedge\neg E_H\lambda$ or $\underline{\Xi}\wedge\neg\RGpath\phi$ is consistent.

If the former is consistent, then there must be an agent $i\in H$ such that $\underline{\Xi}\wedge\neg K_i\lambda$ is consistent, i.e., $\underline{\Xi}\wedge\hat K_i\neg\bigvee_{\Delta\in\mcs_0}\underline{\Delta}$ is consistent. Since $\vdash\bigvee_{\Delta\in\mcs}\underline{\Delta}$ by Lemma \ref{lemma0}, we have $\vdash\neg\bigvee_{\Delta\in\mcs_0}\underline{\Delta}\ra\bigvee_{\Delta'\in\mcs\setminus\mcs_0}\underline{\Delta'}$, and so there must be a $\Theta$ in $\mcs\setminus\mcs_0$ such that $\underline{\Xi}\wedge\hat K_i\underline{\Theta}$ is consistent. By item \ref{itemtt} of this lemma $\Xi\backsim_i\Theta$ (where $\backsim$ is the relation in the canonical pseudo model for $cl(\alpha)$). But then $\Xi$ cannot be in $\mcs_0$ for $\Theta\notin\mcs_0$. A contradiction!

If the latter is consistent, since $\neg\RGpath\phi\in cl(\alpha)$ and $\Xi$ is maximal, $\neg\RGpath\phi\in\Xi$. But $\RGpath\phi\in\Xi$ since $\Xi\backsim_H\Xi$ and every $\ab{G_1\cdots G_n}$-resolved $H$-path from $\Xi$ is a canonical $\RGpath\phi$-path. We reach a contradiction.
\end{enumerate}

\end{document}